\newtheorem{theorem}{Theorem} 
\newtheorem{lemma}{Lemma}
\newtheorem{property}{Property}
\newtheorem{corollary}{Corollary}
\newcommand{\NP}{\ensuremath{\mathrm{NP}}\xspace}
\newcommand{\Left}{\lambda}
\newcommand{\Right}{\rho}
\newcommand{\FPP}{\texttt{dFPP}\xspace}
\newcommand{\EFPP}{\texttt{RAC-drawer}\xspace}
\newcommand{\U}{\mathcal{U}}
\newcommand{\A}{\mathcal{A}}
\newcommand{\true}{\texttt{true}\xspace}
\newcommand{\false}{\texttt{false}\xspace}
\newcommand{\IN}{\texttt{IN}\xspace}
\newcommand{\OUT}{\texttt{OUT}\xspace}
\newcommand{\Rac}{\textsc{RAC}\xspace}
\DeclareMathOperator{\ktop}{top}
\newcommand{\remove}[1]{}
\begin{document}

\begin{frontmatter}

\title{Recognizing and Drawing IC-planar Graphs\tnoteref{t1}}
\tnotetext[t1]{The research is supported in part by the Deutsche
  Forschungsgemeinschaft (DFG), grant Br835/18-1; the MIUR project
  AMANDA ``Algorithmics for MAssive and Networked DAta'',
  prot. 2012C4E3KT\_001; the ESF EuroGIGA project GraDR (DFG grant
  Wo~758/5-1); and NSERC Canada Discovery Grant. An extended abstract
  of this paper has been presented at the $23^{rd}$ International
  Symposium on Graph Drawing, GD 2015~\cite{BrandenburgDEKL15} (see
  also~\cite{bdeklm-rdicg-arXiv15} for a technical
  report.)} 

\author[inst1]{Franz J. Brandenburg}
\ead{brandenb@fim.uni-passau.de}
\author[inst2]{Walter Didimo}
\ead{walter.didimo@unipg.it}
\author[inst3]{William S. Evans}
\ead{will@cs.ubc.ca}
\author[inst4]{Philipp Kindermann\corref{cor1}}
\ead{philipp.kindermann@fernuni-hagen.de}
\author[inst2]{Giuseppe Liotta}
\ead{giuseppe.liotta@unipg.it}
\author[inst2]{Fabrizio Montecchiani\corref{cor1}}
\ead{fabrizio.montecchiani@unipg.it}

\cortext[cor1]{Corresponding authors} 

\address[inst1]{Universit\"at Passau, Germany}
\address[inst2]{Dipartimento di Ingegneria, Universit\`a degli Studi di Perugia, Italy}
\address[inst3]{University of British Columbia, Canada}
\address[inst4]{FernUniversit\"at in Hagen, Germany}

\begin{abstract}
We give new results about the relationship between \emph{1-planar graphs} and \emph{\Rac graphs}. A graph is 1-planar if it has a drawing where each edge is crossed at most once. A graph is \Rac if it can be drawn in such a way that its edges cross only at right angles. These two classes of graphs and their relationships have been widely investigated in the last years, due to their relevance in application domains where computing readable graph layouts is important to analyze or design relational data sets. 
We study \emph{IC-planar graphs}, the sub-family of 1-planar graphs that admit 1-planar drawings with \emph{independent crossings} (i.e., no two crossed edges share an endpoint). We prove that every IC-planar graph admits a straight-line RAC drawing, which may require however exponential area. If we do not require right angle crossings, we can draw every IC-planar graph with straight-line edges in linear time and quadratic area. We then study the problem of testing whether a graph is IC-planar. We prove that this problem is NP-hard, even if a rotation system for the graph is fixed. On the positive side, we describe a polynomial-time algorithm that tests whether a triangulated plane graph augmented with a given set of edges that form a matching is IC-planar.
\end{abstract}

\begin{keyword}
$1$-Planarity \sep IC-Planarity \sep Right Angle Crossings \sep Graph Drawing \sep NP-hardness
\end{keyword}

\end{frontmatter}

\section{Introduction}\label{se:introduction}
\emph{Graph drawing} is a well-established research area that studies how to automatically compute visual representations of relational data sets in many application domains, including software engineering, circuit design, computer networks, database design, social sciences, and biology (see, e.g.,~\cite{dett-gd-99,dl-gvdm-07,jm-gds-03,kw-dg-01,s-gd-02,t-hgd-13}). The aim of a graph visualization is to clearly convey the structure of the data and their relationships, in order to support users in their analysis tasks. In this respect, there is a general consensus that edges with many crossings and bends negatively affect the readability of a drawing of a graph, as also witnessed by several user studies on the subject (see, e.g.,~\cite{DBLP:journals/iwc/Purchase00,DBLP:journals/ese/PurchaseCA02,DBLP:journals/ivs/WarePCM02}). At the same time, more recent cognitive experiments suggest that edge crossings do not inhibit user task performance if the edges cross at large angles~\cite{DBLP:conf/apvis/Huang07,DBLP:journals/vlc/HuangEH14,DBLP:conf/apvis/HuangHE08}. As observed in~\cite{del-dgrac-2011}, intuitions of this fact can be found in real-life applications: for example, in hand-drawn metro maps and circuit schematics, where edge crossings are conventionally at 90 degrees (see, e.g.,~\cite{Vignelli}), and in the guidelines of the CCITT (Comit\'e Consultatif International T\'el\'ephonique et T\'el\'egraphique) for drawing Petri nets, where it is written: ``There should be no acute angles where arcs cross''~\cite{CCITT-85}.

The above practical considerations naturally motivate the theoretical study of families of graphs that can be drawn with straight-line edges, few crossings per edge, and right angle crossings at the same time. This kind of research falls in an emerging topic of graph drawing and graph theory, informally called ``beyond planarity''. The general framework of this topic is to relax the planarity constraint by allowing edge crossings, but still forbidding those configurations that would affect the readability of the drawing too much. Different types of forbidden edge-crossing configurations give rise to different families of beyond planar graphs. For example, for any integer $k \geq 3$, the family of \emph{$k$-quasi planar graphs} is the set of graphs that have a drawing with no $k$ mutually crossing edges (see, e.g.,~\cite{DBLP:journals/jct/AckermanT07,aapps-qpgln-C97,DBLP:journals/siamdm/FoxPS13}). For any positive integer $k$, the family of \emph{$k$-planar graphs} is the set of graphs that admit a drawing with at most $k$ crossings per edge~\cite{pt-gdfce-C97}; in particular, \emph{$1$-planar graphs} have been widely studied in the literature (see, e.g.,~\cite{abk-slgd3-GD13,DBLP:journals/ipl/Didimo13,DBLP:journals/algorithmica/GrigorievB07,help-ft1pg-COCOON12,DBLP:journals/jgt/KorzhikM13,r-esadk-AMS65}). \emph{\Rac (Right Angle Crossing) graphs} are those graphs that admit a drawing where edges cross only at right angles (see, e.g.,~\cite{del-dgrac-2011}). Several algorithms and systems for computing \Rac drawings or, more in general, large angle crossing drawings, have been described in the literature (see, e.g.,~\cite{DBLP:journals/cj/ArgyriouBS13,DBLP:journals/algorithmica/GiacomoDEL14,DBLP:journals/cj/DiGiacomoDGLR15,DBLP:journals/comgeo/GiacomoDLM13,dlr-tdfda-10,DBLP:journals/jgaa/DidimoLR11,nehh-lcacl-2010,DBLP:conf/vl/HuangEHL10}). See also~\cite{dl-cargd-12} for a survey. 

In this scenario, special attention is receiving the study of the relationships between 1-planar drawings and RAC drawings with straight-line edges. The maximum number of edges of an $n$-vertex 1-planar graph is $4n-8$~\cite{pt-gdfce-C97}, while $n$-vertex straight-line 1-planar drawings and $n$-vertex straight-line RAC drawings have at most $4n-9$ edges~\cite{DBLP:journals/ipl/Didimo13} and $4n-10$ edges~\cite{del-dgrac-2011}, respectively. This implies that not all 1-planar graphs admit a straight-line drawing and not all 1-planar graphs with a straight-line drawing admit a straight-line RAC drawing. The characterization of the 1-planar graphs that can be drawn with straight-line edges was given by Thomassen in 1988~\cite{t-rdg-JGT88}.   

\smallskip\noindent{\bf Our Contribution.}
In this paper we give new results on the relationship between 1-planar graphs, \Rac graphs, and straight-line drawings. We concentrate on a subfamily of 1-planar graphs known as \emph{IC-planar graphs}, which stands for graphs with \emph{independent crossings}~\cite{a-cnirc-AMC08}. An IC-planar graph is a graph that admits a 1-planar drawing where no two crossed edges share an endpoint, i.e., all crossing edges form a matching. IC-planar graphs have been mainly studied both in terms of their structure and in terms of their applications to coloring problems~\cite{a-cnirc-AMC08,ks-cpgic-JGT10,z-dcmgp-AMS14,zl-spgic-CEJM13}. We prove that:

\begin{itemize}
\item Every IC-planar graph with $n$ vertices admits a (non-RAC) straight-line drawing in~$O(n^2)$ area, which can be computed in $O(n)$ time (Theorem~\ref{th:straightline}). Our bound on the area requirement is worst-case optimal. We recall that there are embedded 1-planar graphs whose straight-line drawings require~$\Omega(2^n)$ area~\cite{help-ft1pg-COCOON12}.

\item Every IC-planar graph is a \Rac graph (Theorem~\ref{th:rac-drawings}), but we also show that a straight-line \Rac drawing of an $n$-vertex IC-plane graph may require~$\Omega(q^n)$ area, for a suitable constant~$q > 1$ (Theorem~\ref{th:rac-area}).  
\end{itemize}

Moreover, as a natural problem related to the results above, we study the computational complexity of recognizing IC-planar graphs. Namely:

\begin{itemize}
\item We prove that IC-planarity testing is \NP-complete both in the variable embedding setting (Theorem~\ref{th:np-hard}) and when the rotation system of the graph is fixed (Theorem~\ref{th:np-rot}). Note that, 1-planarity testing is already known to be \NP-complete in general~\cite{DBLP:journals/algorithmica/GrigorievB07,DBLP:journals/jgt/KorzhikM13} and for a fixed rotation system~\cite{JGAA-347}. Testing 1-planarity remains NP-hard even for \emph{near-planar} graphs, i.e., graphs that can be obtained by a planar graph by adding an edge~\cite{DBLP:journals/siamcomp/CabelloM13}.   

\item On the positive side, we give a polynomial-time algorithm that tests whether a triangulated plane graph augmented with a given set of edges that form a matching is IC-planar (Theorem~\ref{th:triang-test}).
The interest in this special case is also motivated by the fact that every $n$-vertex IC-planar graph with maximum number of edges (i.e., with $13n/4-6$ edges) is the union of a triangulated planar graph and of a set of edges that form a matching~\cite{zl-spgic-CEJM13}. 
\end{itemize}

We finally recall that the problem of recognizing 1-planar graphs has been studied also in terms of parameterized complexity. Namely, Bannister, Cabello, and Eppstein describe fixed-parameter tractable algorithms with respect to different parameters: vertex cover number, tree-depth, and cyclomatic number~\cite{DBLP:conf/wads/BannisterCE13}. They also show that the problem remains NP-complete for graphs of bounded bandwidth, pathwidth, and treewidth, which makes unlikely to find parameter tractable algorithms with respect to these parameters. Fixed-parameter tractable algorithms have been also described for computing the crossing number of a graph, a problem that is partially related to the research on beyond planarity (see, e.g.,~\cite{DBLP:conf/stoc/Grohe01,DBLP:conf/gd/PelsmajerSS07a}).

The remainder of the paper is organized as follows. 
In Section~\ref{se:preliminaries} we recall basic definitions used in the paper. Section~\ref{ic:sec:drawing} is devoted to prove Theorem~\ref{th:straightline}. In Section~\ref{se:rac} we prove Theorems~\ref{th:rac-drawings} and~\ref{th:rac-area}.  Section~\ref{se:recognition} proves Theorems~\ref{th:np-hard},~\ref{th:np-rot}, and~\ref{th:triang-test}. Conclusions and open problems are in Section~\ref{se:conclusions}.

\section{Preliminaries}\label{se:preliminaries}

\begin{figure}[t]
    \centering
    \subfigure[]{\includegraphics[scale=0.6]{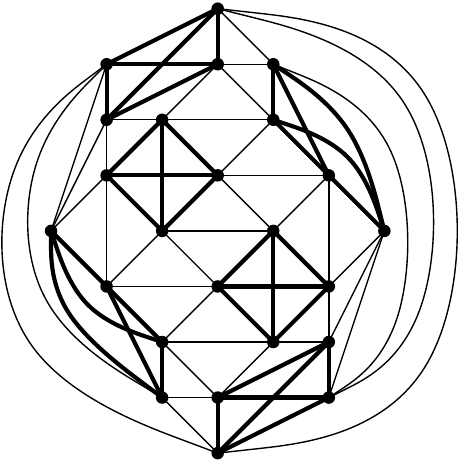}\label{fi:ic-drawing}}
    \hfil
    \subfigure[]{\includegraphics[page=1, scale=0.8]{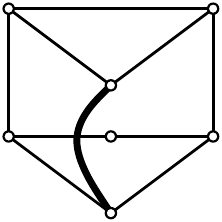}
      \includegraphics[page=2, scale=0.8]{rotsys-diffemd-ex}\label{fi:rotemb}}
    \hfil
    \subfigure[]{\includegraphics[scale=0.8,page=3]{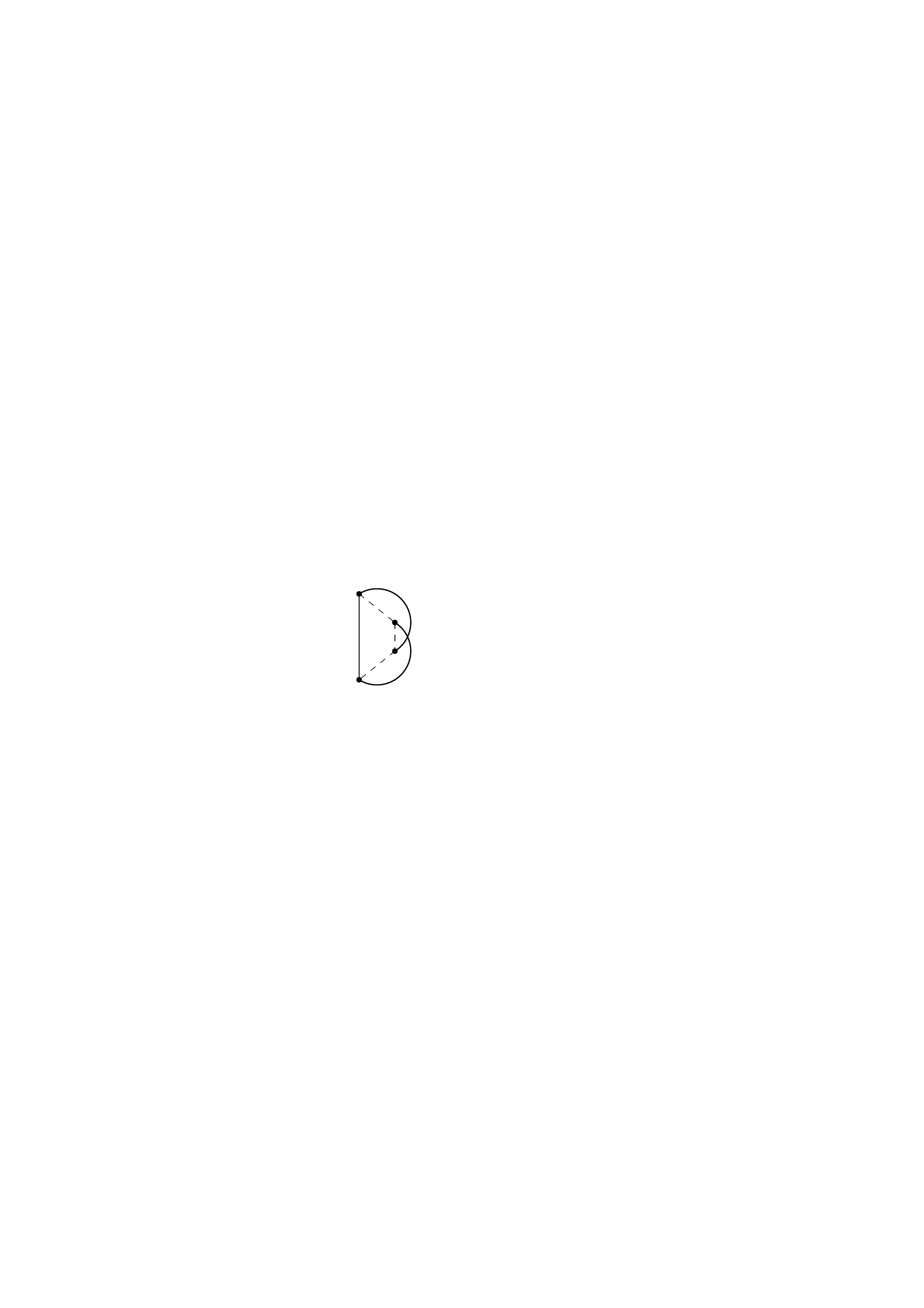}\label{fi:thomassen-1}}
    \hfil
    \subfigure[]{\includegraphics[scale=0.8]{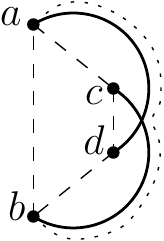}\label{fi:thomassen-2}}
  \caption{(a) An IC-planar drawing. (b) Two different IC-planar embeddings of 
    the same graph with the same rotation system. (c) An X-configuration. (d) 
    A B-configuration.}
\end{figure}

We consider simple undirected graphs~$G$. A \emph{drawing}~$\Gamma$ 
of~$G$ maps the vertices of~$G$ to distinct points in the plane and the edges 
of~$G$ to simple Jordan curves between their endpoints. If the vertices are 
drawn at integer coordinates,~$\Gamma$ is a \emph{grid drawing}.
$\Gamma$ is \emph{planar} if no edges cross, and \emph{1-planar} if each edge
is crossed at most once. $\Gamma$ is \emph{IC-planar} if it is 1-planar and 
there are no crossing edges that share a vertex. An example of an IC-planar graph is shown in Figure~\ref{fi:ic-drawing}. 

A planar drawing~$\Gamma$ of a graph~$G$ induces an \emph{embedding}, 
which is the class of topologically equivalent drawings. In particular, an 
embedding specifies the regions of the plane, called \emph{faces}, whose boundary 
consists of a cyclic sequence of edges. The unbounded face is called the 
\emph{outer face}. For a 1-planar drawing, we can still derive an embedding 
by allowing the boundary of a face to consist also of edge segments from a 
vertex to a crossing point. 
A graph with a given planar (1-planar, IC-planar) embedding is called a 
\emph{plane} (\emph{1-plane, IC-plane}) graph. 
A \emph{rotation system}~$\mathcal{R}(G)$ of a graph~$G$ describes a possible 
cyclic ordering of the edges around the vertices. $\mathcal{R}(G)$ is planar 
(1-planar, IC-planar) if~$G$ admits a planar (1-planar, IC-planar) 
embedding that preserves~$\mathcal{R}(G)$. Observe that~$\mathcal{R}(G)$ can 
directly be retrieved from a drawing or an embedding. The converse does not 
necessarily hold, as shown in Figure~\ref{fi:rotemb}.

A \emph{kite}~$K$ is a graph isomorphic to~$K_4$ with an embedding such that all
the vertices are on the boundary of the outer face, the four edges on the 
boundary are planar, and the remaining two edges cross each other; see Figure~\ref{fi:thomassen-1}. 
Thomassen~\cite{t-rdg-JGT88} characterized the possible crossing configurations 
that occur in a 1-planar drawing. Applying this characterization to IC-planar
drawings gives rise to the following property:
\begin{property}\label{pr:char-crossins}
  Every crossing of an IC-planar drawing is either an X- or a B-crossing.
\end{property}
\noindent
Here, an X-crossing has the crossing ``inside'' the 4-cycle
(see Figure~\ref{fi:thomassen-1}),
and a B-crossing has the crossing ``outside'' the 4-cycle
(see Figure~\ref{fi:thomassen-2}). 
We remark that, according to Thomassen~\cite{t-rdg-JGT88}, a
$1$-planar drawing may contain crossings that are neither X- nor
B-crossings, but W-crossings. This third type of crossing is not
possible in an IC-planar drawing since it contains two vertices
incident to two crossed edges.

Let~$G$ be a plane (1-plane, IC-plane) graph. $G$ is \emph{maximal} if no edge 
can be added without violating planarity (1-planarity, IC-planarity). A planar
(1-planar, IC-planar) graph~$G$ is maximal if every planar (1-planar, IC-planar)
embedding is maximal. 
If we restrict to 1-plane (IC-plane) graphs, we say that~$G$ is 
\emph{plane-maximal} if no edge can be added without creating at least an edge
crossing on the newly added edge (or making the graph not simple).  We call the 
operation of adding edges to~$G$ until it becomes plane-maximal a 
\emph{plane-maximal augmentation}.

\section{Straight-line drawings of IC-planar graphs}\label{ic:sec:drawing}

We show that every IC-planar graph admits an IC-planar straight-line grid 
drawing in quadratic area, and this area is worst-case optimal 
(Theorem~\ref{th:straightline}). The result is based on first using a new technique
that augments an embedding of the input graph to a maximal IC-plane graph (the
resulting embedding might be different from the original one) with specific 
properties (Lemma~\ref{lem:3-connected}), and then suitably applying a drawing
algorithm by Alam {\em et al.} for triconnected 1-plane 
graphs~\cite{abk-slgd3-GD13} on the augmented graph.
We say that a kite $(a,b,c,d)$ with crossing edges $(a,d)$ and $(b,c)$ is 
\emph{empty} if it contains no other 
vertices, that is, the edges $(a,c)$, $(a,d)$, and $(a,b)$ are consecutive in 
the counterclockwise order around~$a$; see Figure~\ref{fi:maximal-planar-augment-2}. The condition for the edges around~$b$, $c$, and~$d$ is analogous. We are now ready to prove the next lemma. 

\newcommand{\lemThreeConText}[1]{
	Let~$G=(V,E)$ be an IC-plane graph with~$n$ vertices. There exists an 
  $O(n)$-time algorithm that computes a plane-maximal IC-plane 
  graph~$G^+ = (V, E^+)$ with~$E \subseteq E^+$ such that the following
  conditions hold: 
	\begin{enumerate}[label={\bfseries (c\arabic*)}]
		\item \label{#1-kite} The four endpoints of each pair of crossing edges 
      induce a kite. 
    \item \label{#1-empty} Each kite is empty.
		\item \label{#1-triangulated} Let~$C$ be the set of crossing edges 
			in~$G^+$. Let~$C^* \subset C$ be a subset containing exactly one edge for
      each pair of crossing edges. Then~$G^+ \setminus C^*$ is plane and 
      triangulated.
    \item \label{#1-3cycle} The outer face of~$G^+$ is a $3$-cycle of non-crossed edges.
	\end{enumerate}
}
\begin{lemma} \label{lem:3-connected}
	\lemThreeConText{main}
\end{lemma}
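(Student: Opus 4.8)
The plan is to handle the crossings one at a time and then triangulate what is left over. Since the crossed edges of $G$ form a matching, there are at most $n/4$ crossings and their endpoints are pairwise disjoint, so each crossing can be treated by a purely local operation. I would proceed in three phases: first enclose every crossing in an \emph{empty} kite, giving conditions~(c1) and~(c2); then delete one diagonal per crossing and triangulate the resulting plane graph, giving~(c3); and finally choose the outer face to satisfy~(c4). Each vertex, edge, and crossing is touched a constant number of times, so the whole procedure runs in $O(n)$ time.

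In the first phase I rely on Property~\ref{pr:char-crossins}: every crossing of $G$ is an X- or a B-crossing. Consider a crossing of the diagonals $(a,d)$ and $(b,c)$. By independence the four endpoints are distinct, and each of the four sides, e.g., $(a,b)$, is necessarily uncrossed, since otherwise $a$ would be incident to the two crossed edges $(a,d)$ and $(a,b)$. If the crossing is an X-crossing, the four endpoints already surround the crossing, and I complete it to a kite by routing the missing sides inside the four faces incident to the crossing point, each close to that point so that it cuts off a small triangle containing no vertex; this produces an \emph{empty} kite. If the crossing is a B-crossing, I first re-embed the region delimited by the two crossing edges (the construction is explicitly allowed to change the embedding) so as to turn it into an X-crossing, and then proceed as above.

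The step I expect to be the main obstacle is guaranteeing that the four sides can in fact be realized around each crossing while keeping the graph simple. The difficulty is that a side such as $(a,b)$ may already be present in $G$ but drawn \emph{away} from the crossing; then adding a second copy would create a multi-edge, and the four endpoints would fail to form a kite \emph{embedding}. Resolving this requires a careful analysis of the faces incident to the crossing: when a side already bounds the corresponding wedge, nothing is added; otherwise one argues, using independence and, where needed, the local re-embedding above, that the wedge can still be closed off consistently, so that all four sides end up around the crossing and the kite stays empty. Making this analysis exhaustive across both X- and B-crossings and all the ways a side might pre-exist is the technical heart of the lemma.

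For the remaining conditions, let $C$ be the set of crossing edges of the kited graph and let $C^*$ contain one diagonal per kite. Deleting $C^*$ yields a plane graph $H_0$ in which every kite contributes two empty triangles, namely its quadrilateral split by the surviving diagonal. I would add planar edges, avoiding the kite interiors, to make $H_0$ connected, and then triangulate every remaining face, taking the usual care to avoid multi-edges, obtaining a plane triangulation $H$; re-inserting $C^*$ inside the still empty kites gives $G^+$. Then $G^+\setminus C^*=H$ is plane and triangulated, which is condition~(c3); the only crossings of $G^+$ are the matched diagonals, so $G^+$ is IC-plane; and since $H$ is a triangulation, any further edge between non-adjacent vertices crosses an edge of $H\subseteq G^+$, so $G^+$ is plane-maximal. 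Finally, for condition~(c4) I would pick as outer face of $H$ a triangular face none of whose edges is a surviving diagonal; such a face exists because the diagonals are few relative to the $2n-4$ faces of $H$, and its three edges are therefore uncrossed in $G^+$.
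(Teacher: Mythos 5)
Your overall plan coincides with the paper's: complete each crossing pair to a kite, make kites empty, delete one diagonal per pair and triangulate the plane remainder, reinsert the diagonals, and choose a crossing-free triangular outer face. Your counting argument for (c4) (at most $n/4$ surviving diagonals versus $2n-4$ triangular faces) and the plane-maximality observation are sound. But there is a genuine gap exactly at the step you yourself flag as ``the technical heart'': you never supply the argument that a side such as $(a,b)$ which already exists in $G$ but is drawn \emph{away} from the crossing can be handled. Writing that ``one argues \ldots{} the wedge can still be closed off consistently'' states the claim to be proved rather than proving it, and the tools you offer (an unspecified ``re-embedding of the region'' for B-crossings plus a ``careful analysis of the faces'') are not concrete enough to carry it; in particular your small-triangle routing for X-crossings simply fails whenever a side pre-exists elsewhere, which is precisely the multi-edge obstruction you identify.

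The missing idea is a single uniform \emph{rerouting} that dissolves the case analysis. Since the two crossing edges $(a,d)$ and $(b,c)$ cross only each other (in an IC-plane drawing each of them is crossed exactly once, namely by the other), the curve that starts at $a$, runs alongside $(a,d)$ to the crossing point, and then alongside $(b,c)$ to $b$, crosses nothing. Hence a pre-existing side $(a,b)$ drawn elsewhere is not duplicated but \emph{rerouted} into this corridor; the same rerouting applied to the side $(a,b)$ of a B-configuration converts it into an X-configuration (this is the precise content of your vague ``re-embed the region''), and applied to all four sides it makes a kite empty in one stroke. IC-planarity further guarantees that no two B-configurations share an edge, so these reroutings never conflict with one another, and each affects a constant-size neighbourhood, preserving your $O(n)$ bound. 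With this observation your Phase~1 closes and the rest of your argument goes through essentially as in the paper; without it, conditions (c1) and (c2) are not established.
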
 

\begin{figure}[t]
    \centering
    \subfigure[The kite (drawn bold) is not empty]{\includegraphics[page=1]{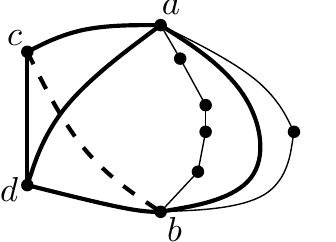}\label{fi:maximal-planar-augment-1}}\hfil
    \subfigure[Rerouting edge $(a,b)$ to make the kite empty]{\includegraphics[page=2]{maximal-planar-augment}\label{fi:maximal-planar-augment-2}}\hfil
    \subfigure[Triangulating the remaining faces]{\includegraphics[page=3]{maximal-planar-augment}\label{fi:maximal-planar-augment-3}}
  \caption{Illustration for the proof of Lemma~\ref{lem:3-connected}.}
  \label{fi:maximal-planar-augment}
\end{figure}

\begin{proof} 
  Let~$G$ be an IC-plane graph; we augment $G$ by adding edges such that for each pair of 
  crossing edges~$(a,d)$ and~$(b,c)$ the subgraph induced by vertices $\{a,b,c,d\}$ is 
  isomorphic to~$K_4$; see the dashed edges in Figures~\ref{fi:thomassen-1} and~\ref{fi:thomassen-2}. 
  Next, we want to make sure that this subgraph forms an X-configuration and 
  the resulting kite is empty.
  Since $G$ is IC-planar, it has no two B-configurations sharing an edge.
  Thus, we remove a B-configuration with vertices $\{a,b,c,d\}$ by rerouting
  the edge~$(a,b)$ to follow the edge~$(a,d)$ from vertex~$a$ until the
  crossing point, then edge~$(b,c)$ until vertex~$b$, as shown by the dotted
  edge in Figure~\ref{fi:thomassen-2}. This is always possible, because
  edges~$(a,c)$ and~$(b,d)$ only cross each other; hence, following their curves,
  we do not introduce any new crossing. The resulting IC-plane graph 
  satisfies~\ref{main-kite} (recall that, by Property~\ref{pr:char-crossins},
  only X- and B-configurations are possible).    
  Now, assume that a kite $(a,b,c,d)$ is not empty; see 
  Figure~\ref{fi:maximal-planar-augment-1}. Following the same argument as above, 
  we can reroute the edges $(a,b)$, $(b,d)$, $(c,d)$ and $(a,d)$ to follow the
  crossing edges $(a,d)$ and $(b,c)$; see Figure~\ref{fi:maximal-planar-augment-2}.
  The resulting IC-plane graph is denoted by~$G'$ and satisfies~\ref{main-empty}.
  
  We now augment~$G'$ to~$G^+$, such that~\ref{main-triangulated} is satisfied.
  Let~$C$ be the set of all pairs of crossing edges in~$G'$. 
  Let~$C^*$ be a subset constructed from~$C$ by keeping only one (arbitrary) 
  edge for each pair of crossing edges. The graph~$G' \setminus C^*$
  is clearly plane. To ensure~\ref{main-triangulated},
  graph~$G^+ \setminus C^*$ must be plane and triangulated. Because~$G'$ 
  satisfies~\ref{main-empty}, each removed edge spans two triangular faces 
  in~$G' \setminus C^*$. Thus, no face incident to a crossing edge has to be
  triangulated. We internally triangulate the other faces by picking any vertex 
  on its boundary and connecting it to all other vertices (avoiding multiple
  edges) of the boundary; see e.g. Figure~\ref{fi:maximal-planar-augment-3}. 
  Graph $G^+$ is then obtained by reinserting the edges in $C^*$ and 
  satisfies~\ref{main-triangulated}. To satisfy~\ref{main-3cycle}, notice 
  that~$G^+$ is IC-plane, hence, it has a face~$f$ whose boundary contains only 
  non-crossed edges. Also, $f$ is a $3$-cycle by construction. Thus, we can 
  re-embed~$G^+$ such that~$f$ is the outer face. 
  
  It remains to prove that the 
  described algorithm runs in~$O(n)$ time. Let~$m$ be the number of 
  edges of~$G$. Augmenting the graph such that for each pair of crossing edges 
  their endpoints induce a subgraph isomorphic to~$K_4$ can be done in~$O(m)$
  time (the number of added edges is~$O(n)$). Similarly, rerouting some edges 
  to remove all B-configurations requires~$O(m)$ time. Also, triangulating 
  the graph $G' \setminus C^*$ can be done in time proportional to the number of 
  faces of $G' \setminus C^*$, which is~$O(n+m)$. Since IC-planar graphs are 
  sparse~\cite{zl-spgic-CEJM13}, the time complexity follows.
\end{proof}

\newcommand{\thStraightLine}{
	There is an $O(n)$-time algorithm that takes an IC-plane 
	graph~$G$ with~$n$ vertices as input and constructs an IC-planar 
	straight-line grid drawing of~$G$ in~$O(n) \times O(n)$ area. 
	This area is worst-case optimal.
}

\begin{theorem}\label{th:straightline}
	\thStraightLine
\end{theorem}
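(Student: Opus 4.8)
The plan is to obtain the drawing of $G$ as a restriction of a drawing of the augmented graph $G^+$ produced by Lemma~\ref{lem:3-connected}, so that the straight-line grid-drawing algorithm of Alam et al.~\cite{abk-slgd3-GD13} for triconnected $1$-plane graphs can be invoked. Concretely, I would first apply Lemma~\ref{lem:3-connected} to the input IC-plane graph $G$ to compute, in $O(n)$ time, a plane-maximal IC-plane graph $G^+=(V,E^+)$ with $E\subseteq E^+$ satisfying \ref{main-kite}--\ref{main-3cycle}. The point of these four properties is that they place $G^+$ \emph{exactly} in the input class required by~\cite{abk-slgd3-GD13}: by \ref{main-triangulated} the plane graph $G^+\setminus C^*$ is triangulated, hence $3$-connected for $n\ge 4$ (the cases $n\le 3$ being trivial), and reinserting the edges of $C^*$ cannot decrease connectivity, so $G^+$ is triconnected; by \ref{main-kite} and \ref{main-empty} every crossing of $G^+$ is an empty kite, i.e.\ an X-configuration (consistent with Property~\ref{pr:char-crossins} once all B-configurations have been rerouted away); and by \ref{main-3cycle} the outer face is a crossing-free triangle.

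I would then run the algorithm of~\cite{abk-slgd3-GD13} on $G^+$ to obtain, in $O(n)$ time, an IC-planar straight-line grid drawing $\Gamma^+$ of $G^+$ in $O(n)\times O(n)$ area, and recover a drawing $\Gamma$ of $G$ by deleting from $\Gamma^+$ every edge of $E^+\setminus E$. Edge deletion keeps all vertices at their integer coordinates and all surviving edges straight, so neither the area nor the linear running time changes; it also cannot create a crossing, and the crossings that remain are precisely the (possibly rerouted) crossing pairs of $G$, which form a matching. Hence $\Gamma$ is an IC-planar straight-line grid drawing of $G$ in $O(n)\times O(n)$ area. The rerouting inside Lemma~\ref{lem:3-connected} may leave $\Gamma$ with a different embedding than the input IC-plane graph, but this is harmless since the statement asks only for \emph{some} IC-planar straight-line drawing of $G$.

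It remains to establish worst-case optimality, for which I would show that some IC-planar graph needs $\Omega(n)\times\Omega(n)$ area in every IC-planar straight-line grid drawing. The mechanism is a deletion argument: since the crossing edges of such a drawing form a matching, deleting one edge from each crossing pair removes at most $n/2$ edges and produces, on the same point set and with no larger area, a planar straight-line drawing of a spanning subgraph. It then suffices to take a planar graph that requires $\Omega(n^2)$ area in every planar straight-line grid drawing---the nested-triangles graph is the classical source of such an $\Omega(n)\times\Omega(n)$ bound---chosen robust enough that any matching deletion still leaves an $\Omega(n^2)$-area-forcing subgraph; such a graph is in particular IC-planar, completing the lower bound. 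I expect the main obstacle to lie not here but in the first step: verifying that $G^+$ satisfies \emph{all} the hypotheses of~\cite{abk-slgd3-GD13}, including that it is not one of the exceptional non-straight-line-drawable triconnected $1$-plane graphs. This is exactly the role of the carefully engineered properties \ref{main-kite}--\ref{main-3cycle}; once they are in place, the edge-deletion bookkeeping for the upper bound and the planar lower bound are routine.
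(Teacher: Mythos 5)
Your upper-bound argument is essentially the paper's: augment via Lemma~\ref{lem:3-connected}, observe that $G^+$ is triconnected because it contains the triangulated plane subgraph $G^+\setminus C^*$, run the algorithm of Alam \emph{et al.}~\cite{abk-slgd3-GD13}, and delete the augmentation edges. One small refinement: the issue with~\cite{abk-slgd3-GD13} is not an exceptional class of non-drawable graphs but that its output may contain a single bent edge on the outer face when the outer face carries a pair of crossing edges; condition~\ref{main-3cycle} (the outer face of $G^+$ is a $3$-cycle of non-crossed edges) is what rules this out and yields a genuinely straight-line drawing. With that reading, your first part is correct.

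The lower bound, however, has a genuine gap. You reduce to planarity by deleting one edge from each crossing pair and then require a planar graph that forces $\Omega(n^2)$ area \emph{after the deletion of an arbitrary matching}, but you never exhibit such a graph, and the nested-triangles graph of Dolev \emph{et al.}~\cite{dlt-pepg-ACR84} does not obviously qualify: a matching may remove one edge from \emph{every} nested triangle (a matching touches each vertex once, and each triangle has three vertices), turning all the $3$-cycles into paths. The quadratic lower bound rests on each triangle being a closed curve that must enclose the next one in every planar drawing; once the cycles are broken, that nesting argument collapses, and it is entirely unclear that the residual graph still needs quadratic area. The paper sidesteps this robustness question with a different mechanism: it augments the nested-triangles graph $G$ by adding, for each edge $(u,v)$, a new vertex $c_{uv}$ and edges $(u,c_{uv})$, $(c_{uv},v)$, so that every original edge lies on a $3$-cycle. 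Then in any IC-planar straight-line drawing of the augmented graph, two original edges $(u,v)$ and $(w,z)$ cannot cross at all: the closed curves $u,v,c_{uv}$ and $w,z,c_{wz}$ would have to cross an even number of times, hence at least twice, which forces either a twice-crossed edge or two crossed adjacent edges, both forbidden by IC-planarity. So $G$ itself is drawn planar inside the IC-planar drawing and the $\Omega(n^2)$ bound of~\cite{dlt-pepg-ACR84} applies verbatim, with no deletion step and no robustness hypothesis. To repair your proof you would either have to prove your robustness claim for a concrete family (which I doubt holds for nested triangles as stated) or adopt an augmentation of this kind that forbids crossings on the original edges outright.
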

\begin{proof}
  Augment~$G$ into a plane-maximal IC-plane graph~$G^+$ in~$O(n)$ time using
  Lemma~\ref{lem:3-connected}. Graph $G^+$  is triconnected, as it contains a
  triangulated plane subgraph. Draw~$G^+$ with the algorithm by Alam 
  {\em et al.}~\cite{abk-slgd3-GD13} which takes as input a 1-plane
  triconnected graph with~$n$ vertices and computes a 1-planar drawing on the
  $(2n -2 )\times(2n-3)$ grid in~$O(n)$ time; this drawing is straight-line, but for 
  the outer face, which may contain a bent edge if it has two crossing 
  edges. Since by Lemma~\ref{lem:3-connected} the outer face of~$G^+$ has no
  crossed edges,~$\Gamma$ is straight-line and IC-planar.  The edges added during the augmentation are removed from~$\Gamma$.
  
  It remains to prove that the area bound of the algorithm is worst-case 
  optimal. To this aim, we show that for every $n \geq 2$ there exists an 
  IC-planar graph~$G$ with $\Theta(n)$ vertices, such that every   IC-planar 
  straight-line grid drawing of~$G$ requires~$\Omega(n^{2})$ area.
  Dolev {\em et al.}~\cite{dlt-pepg-ACR84}
	described an infinite family of planar graphs, called nested triangle 
	graphs, such that every planar straight-line drawing of an $n$-vertex 
  graph~$G$ (for $n \geq 6$) of this family requires~$\Omega(n^2)$ area. 
  We augment~$G$ as follows. For every edge~$(u,v)$ of~$G$, we add a 
  vertex~$c_{uv}$, and two edges~$(u,c_{uv})$ and~$(c_{uv},v)$. Denote by~$G^+$
  the resulting augmented graph, which clearly has~$\Theta(n)$ vertices. 
  We now show that in every possible IC-planar 
  straight-line drawing of~$G^+$ there are no two edges of~$G$ that cross each 
  other.
  Observe that the subgraph induced by the vertices~$u,v,c_{uv}$ is a
  3-cycle.
  Any drawing of two cycles must cross an even number of times.
  If two edges~$(u,v)$ and~$(w,z)$ of~$G$
  cross each other in an IC-planar drawing of~$G^+$, then the cycles
  $u,v,c_{uv}$ and $w,z,c_{wz}$ must cross at least twice.
  Since these are 3-cycles, either some edge is crossed at least twice
  or two adjacent edges are crossed.  In either case, this violates
  the IC-planar condition.
  Hence, the 
  subgraph~$G$ must be drawn planar and this implies that any
  straight-line IC-planar drawing of $G^+$ requires~$\Omega(n^2)$ area.
\end{proof}

\section{IC-planarity and \Rac graphs}\label{se:rac}

It is known that every $n$-vertex maximally dense \Rac graph (i.e., \Rac graph with $4n-10$ edges) is 1-planar, and that there 
exist both 1-planar graphs that are not \Rac and \Rac graphs that are not 
1-planar~\cite{el-rac1p-DAM13}.  
Here, we further investigate the intersection between the classes of 
1-planar and \Rac graphs, showing that all IC-planar graphs are \Rac. To 
this aim, we describe a polynomial-time constructive algorithm. The computed 
drawings may require exponential area, which is however worst-case optimal; 
indeed, we exhibit IC-planar graphs that require exponential area in any 
possible IC-planar straight-line \Rac drawing.
Our construction extends the linear-time algorithm by de 
Fraysseix {\em et al.}~\cite{dpp-hdpgg-C90} that computes a planar straight-line grid drawing of a 
maximal (i.e., triangulated) plane graph in quadratic area;
we call it the~\FPP algorithm. We need to recall the 
idea behind~\FPP before describing our extension. 

\smallskip\noindent{\bfseries Algorithm~\FPP.} Let~$G$ be a maximal plane graph 
with~$n \geq 3$ vertices. The \FPP algorithm first computes a suitable linear 
ordering of the vertices of~$G$, called a \emph{canonical ordering} of~$G$,  
and then incrementally constructs a drawing of~$G$ using a technique called 
\emph{shift method}. This method adds one vertex per time, following the 
computed canonical ordering and shifting vertices already in the drawing when 
needed. Namely, let $\sigma=(v_1,v_2,\dots,v_n)$ be a linear ordering of the 
vertices of~$G$. For each integer~$k \in [3, n]$, denote by~$G_k$ the plane 
subgraph of~$G$ induced by the~$k$ vertices~$v_1,v_2,\dots,v_k$ ($G_n=G$) and 
by~$C_k$ the boundary of the outer face of~$G_k$, called the \emph{contour} 
of~$G_k$. Ordering~$\sigma$ is a canonical ordering of~$G$ if the following 
conditions hold for each integer~$k \in [3, n]$: 
\begin{enumerate*}[label=(\roman{*})]
  \item $G_k$ is biconnected and internally triangulated;
  \item $(v_1,v_2)$ is an outer edge of~$G_k$; and 
  \item if~$k+1 \leq n$, vertex~$v_{k+1}$ is located in the outer face 
    of~$G_k$, and all neighbors of~$v_{k+1}$ in~$G_k$ appear on~$C_k$ 
    consecutively.
\end{enumerate*}

We call \emph{lower neighbors} of~$v_k$ all neighbors~$v_j$ of~$v_k$ for 
which~$j < k$. Following the canonical ordering~$\sigma$, the shift method 
constructs a drawing of~$G$ one vertex 
per time. The drawing~$\Gamma_k$ computed at step~$k$ is a drawing of~$G_k$. 
Throughout the computation, the following invariants are maintained for 
each~$\Gamma_k$, with~$3 \leq k \leq n$:
\begin{enumerate*}[label=({\bfseries I\arabic*})]
  \item \label{inv1} $p_{v_1}=(0,0)$ and $p_{v_2}=(2k-4,0)$;
  \item \label{inv2} $x(w_1)<x(w_2)<\dots<x(w_t)$, where 
    $w_1=v_1, w_2,\dots,w_t=v_2$ are the
    vertices that appear along~$C_k$, going from~$v_1$ 
    to~$v_2$.
  \item \label{inv3} Each edge~$(w_i,w_{i+1})$ (for $i=1,2,\dots,t-1$) is drawn 
    with slope either~$+1$ or~$-1$.
\end{enumerate*}

More precisely,~$\Gamma_3$ is constructed placing~$v_1$ at~$(0,0)$, $v_2$ 
at~$(2,0)$, and~$v_3$ at~$(1,1)$. The 
addition of~$v_{k+1}$ to~$\Gamma_k$ is executed as follows. Let 
$w_p,w_{p+1},\dots,w_ q$ be the lower neighbors of~$v_{k+1}$ ordered from left 
to right. Denote by~$\mu(w_p,w_q)$ the intersection point between the line 
with slope~$+1$ passing through~$w_p$ and the line with slope~$-1$ passing 
through~$w_q$. Point~$\mu(w_p,w_q)$ has integer coordinates and thus it 
is a valid placement for~$v_{k+1}$. With this placement, 
however,~$(v_{k+1},w_p)$ and~$(v_{k+1},w_q)$ may overlap 
with ~$(w_p,w_{p+1})$ and~$(w_{q-1},w_q)$, respectively; see 
Figure~\ref{fi:shift-1}. To avoid this, a
\emph{shift} operation is applied: $w_{p+1}$, $w_{p+2}$,$\dots$,$w_{q-1}$ 
are shifted to the right by~1 unit, and $w_q, w_{q+1}, \dots, w_t$ 
are shifted to the right by~2 units. Then~$v_{k+1}$ is placed at 
point~$\mu(w_p,w_q)$ with no overlap; see Figure~\ref{fi:shift-2}.
We recall that, to keep planarity, when the algorithm 
shifts a vertex~$w_i$ ($p+1 \leq i \leq t$) of~$C_k$, it also shifts some of the
inner vertices together with it; for more details on this point refer 
to~\cite{cp-ltadp-IPL95,dpp-hdpgg-C90}. By Invariants~\ref{inv1} and~\ref{inv3},
the area of the final drawing is $(2n-4) \times (n-2)$.

\begin{figure}[t]
\centering
\subfigure[]{\includegraphics[scale=0.65,page=1]{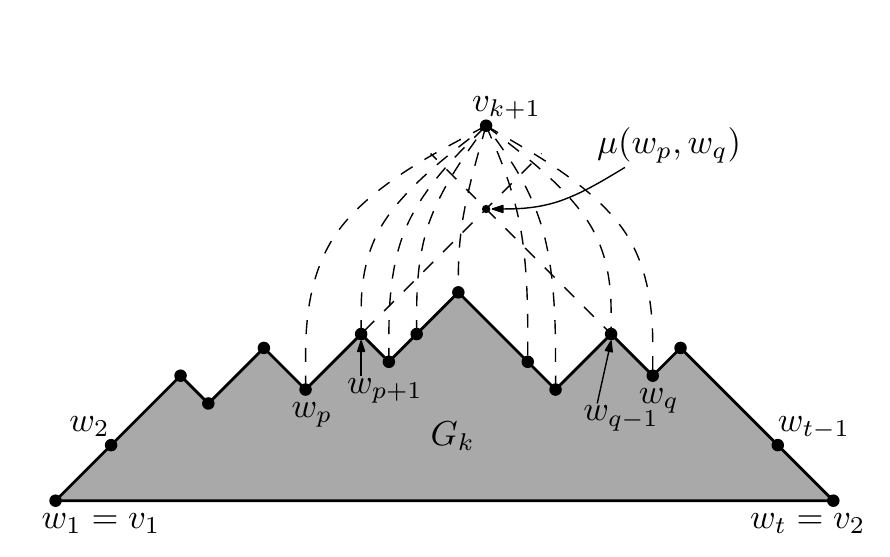}\label{fi:shift-1}}\hfil
\subfigure[]{\includegraphics[scale=0.65,page=2]{shift}\label{fi:shift-2}}
  \caption{Illustration of the shift algorithm at the addition step of 
    $v_{k+1}$. The shift operation changed the slopes of the edges drawn bold. (a) Placing $v_{k+1}$ at $\mu(w_p,w_q)$ would create overlaps. (b) After the shift operation, $v_{k+1}$ can be placed at 
      $\mu(w_p,w_q)$ without overlaps. }
  \label{fi:shift}
\end{figure}

\medskip\noindent{\bfseries Our extension.} Let~$G$ be an IC-plane graph, and 
assume that~$G^+$ is the plane-maximal IC-plane graph obtained from~$G$ by 
applying the technique of Lemma~\ref{lem:3-connected}. Our drawing algorithm 
computes an IC-planar drawing of~$G^+$ with right angle crossings, by 
extending algorithm \FPP. It adds to the classical shift operation \emph{move}
and \emph{lift} operations to guarantee that one of the crossing edges of a kite
is vertical and the other is horizontal. We now give an idea of our technique,
which we call \EFPP. Details are given in the proof of Theorem~\ref{th:rac-drawings}.
Let~$\sigma$ be a canonical ordering constructed from the underlying maximal 
plane graph of~$G^+$. 
Vertices are incrementally added to the drawing, according to~$\sigma$, 
following the same approach as for \FPP. However, suppose that $K =(a,b,c,d)$ 
is a kite of~$G^+$, and that~$a$ and~$d$ are the first and the last vertex 
of~$\sigma$ among the vertices in~$K$, respectively. Once~$d$ has been added to 
the drawing, the algorithm applies a suitable combination of move and lift 
operations to the vertices of the kite to rearrange their positions so to 
guarantee a right angle crossing. Note that, following the \FPP technique,~$a$
was placed at a $y$-coordinate smaller than the $y$-coordinate of~$d$. A 
move operation is then used to shift~$d$ horizontally to the same 
$x$-coordinate as~$a$ (i.e.,~$(a,d)$ becomes a vertical segment in the drawing); 
a lift operation is used to vertically shift the lower between~$b$ and~$c$, 
such that these two vertices get the same $y$-coordinates.
Both operations are applied so to preserve planarity and to maintain 
Invariant~\ref{inv3} of \FPP;  however, they do not maintain Invariant~\ref{inv1}, 
thus the area can increase more than in the \FPP algorithm and 
may be exponential. The application of move/lift operations on the vertices 
of two distinct kites do not interfere each other, as the kites do not share 
vertices in an IC-plane graph. The main operations of the algorithm are depicted
in Figure~\ref{fi:lift}.

\newcommand{\thRacDrawings}{Let $G$ be an IC-plane graph with $n$ vertices. There exists an $O(n^3)$-time algorithm that constructs a straight-line IC-plane \Rac grid drawing of $G$.}

\begin{theorem}\label{th:rac-drawings}
  \thRacDrawings
\end{theorem}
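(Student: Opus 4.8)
The plan is to reduce the problem to a triangulated plane graph, run a modified shift method on it, and then locally perturb each kite into right-angle position. First I would invoke Lemma~\ref{lem:3-connected} to augment $G$ into a plane-maximal IC-plane graph $G^+$ in which every crossing pair induces an empty kite (condition~\ref{main-empty}), the outer face is a crossing-free triangle, and deleting one edge per crossing pair (the set $C^*$) leaves a triangulated plane graph $M=G^+\setminus C^*$. I would then compute a canonical ordering $\sigma=(v_1,\dots,v_n)$ of $M$ with $v_1,v_2$ on the crossing-free outer triangle, run the \FPP shift method to obtain a planar straight-line grid drawing of $M$ obeying Invariants~\ref{inv1}--\ref{inv3}, and reinsert the edges of $C^*$; this yields an IC-plane drawing of $G^+$ that I still have to make \Rac.

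The heart of the algorithm is the \EFPP treatment of a single kite $K$ with crossing edges $(a,d)$ and $(b,c)$, labelled so that $a$ and $d$ are the first and the last of the four kite vertices in $\sigma$. As soon as $v_k=d$ has been placed, I would run two operations. A \emph{move} translates $d$ horizontally until $x(d)=x(a)$, so that $(a,d)$ becomes a vertical segment; a \emph{lift} translates the lower of $b,c$ upward until $y(b)=y(c)$, so that $(b,c)$ becomes a horizontal segment. Since a vertical segment and a horizontal segment always meet orthogonally, the only thing left to check is that the two segments actually cross. This I would derive from convex position: the empty kite is drawn as a quadrilateral with boundary $a,b,d,c$, and because $a$ is placed lowest and $d$ highest among the kite vertices, after the move and lift the four points sit in convex position with $a$ at the bottom, $d$ at the top, and $b,c$ on either side, so the diagonals $(a,d)$ and $(b,c)$ cross at a right angle. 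As kites are vertex-disjoint and empty in an IC-plane graph, the vertices displaced by different kites occupy disjoint regions, and the per-kite operations do not interfere.

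The main obstacle is to prove that move and lift preserve the planarity of $M$ and maintain Invariant~\ref{inv3}, so that the shift method can be continued for $v_{k+1},\dots,v_n$. The move of $d$ is essentially a shift, so I would reuse the dependency mechanism of \FPP~\cite{cp-ltadp-IPL95,dpp-hdpgg-C90}: translating $d$ together with its dependent set keeps every edge planar and changes only the slopes of edges incident to $d$, which I must reconcile with~\ref{inv3}. The genuinely new and delicate step is the \emph{vertical} lift, which lies outside the horizontal-only repertoire of the shift method. Here I would exploit the emptiness of the kite (condition~\ref{main-empty}) to argue that the lifted vertex, carried along with a suitable dependent set, sweeps a region free of other vertices and edges, so that no crossing is created, and I would verify that the slopes of the contour edges are restored so that~\ref{inv3} survives. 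Since move and lift violate Invariant~\ref{inv1}, the horizontal extent can blow up by a constant factor per kite; this is precisely what permits the exponential area that Theorem~\ref{th:rac-area} shows to be unavoidable.

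Finally I would account for the running time. The shift method performs $O(n^2)$ coordinate updates, and each of the $O(n)$ kites triggers one move and one lift, each displacing $O(n)$ vertices, for $O(n^2)$ further updates; hence there are $O(n^2)$ coordinate updates in total. Because the coordinates may grow exponentially, each one needs $\Theta(n)$ bits and a single arithmetic operation costs $O(n)$ time, so the overall running time is $O(n^2)\cdot O(n)=O(n^3)$. Removing the augmentation edges of $E^+\setminus E$ then yields the desired straight-line IC-plane \Rac grid drawing of $G$, matching the claimed bound.
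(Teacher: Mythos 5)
Your high-level plan coincides with the paper's (augment via Lemma~\ref{lem:3-connected}, canonical ordering, \FPP plus per-kite \emph{move} and \emph{lift}, $O(n^3)$ time), but the proof has a genuine gap exactly where you defer it: you never make the lift operation work. You propose to translate the lower of $b,c$ \emph{vertically} and to justify planarity by the emptiness of the kite, but emptiness (condition~\ref{main-empty}) only clears the inside of the kite; the edges endangered by raising $b$ are \emph{outside} it, namely the edges from $b$ to its upper-right neighbors $z_1,\dots,z_r$ and to the already-placed vertices in $\A_l(b)$. A naive vertical translation of $b$ (with any ``dependent set'') also destroys Invariant~\ref{inv3}, since the contour edges at $b$ no longer have slope $\pm 1$, and the shift machinery you invoke is horizontal-only, as you yourself note. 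The paper resolves this differently: it never moves a vertex vertically at all. Instead it \emph{undoes} the placement of $\A_l(b)$ (resp.\ $\A_r(c)$), modifies $\sigma$ so these vertices are reinserted after $b$, and realizes the lift by horizontal shift operations followed by re-placing $b$ at the intersection of the $\pm 1$ diagonals, so Invariant~\ref{inv3} is automatic; in the hard case $b \prec a \prec c$ (both contour edges of slope $+1$, which your convex-position picture silently excludes) it additionally pre-flattens the edge $(b,\Right(b))$ so its slope is at most $|\alpha_{\min}|$ in absolute value, which is precisely what guarantees that the lifted $b$ still sees each $z_i$. None of this is recoverable from your sketch, and the case analysis on the canonical order of $b,a,c$ — where most of the actual work lies — is absent.

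Two further omissions matter. First, you take for granted that when $d$ is placed, all of $a,b,c$ lie consecutively on the contour $C_k$ with $a$ adjacent to $d$ in the planar skeleton; this fails if $C^*$ happens to contain $(a,d)$. The paper secures it by modifying the backward shelling procedure: when it first meets the top vertex $d$ of a kite it swaps which crossing edge is kept so that the retained edge is incident to $\ktop(K)$. Without this step your drawing of $M=G^+\setminus C^*$ need not present the kite in the configuration your move/lift assumes (and a kite on the base edge $(v_1,v_2)$ needs the paper's dummy-vertex fix, which you also omit). Second, your time analysis charges $O(n)$ per coordinate update for $\Theta(n)$-bit arithmetic but omits the undo/reinsert cost entirely; in the paper the undo/reinsert, together with the update of $\sigma$ ensuring no vertex set is undone twice, is both the dominant $O(n^2)$-per-step term in the $O(n^3)$ bound and a correctness ingredient (reinserted vertices are re-placed consistently after the lift), not an optional implementation detail.
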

\begin{proof} 
  Let $G^+$ be the augmented graph constructed from~$G$ by using
  Lemma~\ref{lem:3-connected}. Call~$G'$ the subgraph obtained from~$G^+$ by 
  removing one edge from each pair of crossing edges;~$G'$ is a maximal plane 
  graph (see condition~\ref{main-triangulated} of Lemma~\ref{lem:3-connected}). 
  We apply on~$G'$ the shelling procedure used by de Fraysseix {\em et al.} 
  to compute a canonical ordering~$\sigma$ of~$G'$ in~$O(n)$ 
  time~\cite{dfpp-sssfe-STOC88}; it goes backwards, starting from a vertex on 
  the outer face of~$G'$ and successively removing a vertex per time from the 
  current contour. However, during this procedure, some edges of~$G'$ can be 
  replaced with some other edges of~$G^+$ that were previously excluded, 
  although~$G'$ remains maximal planar. Namely, whenever the shelling procedure
  encounters the first vertex~$d$ of a kite~$K=(a,b,c,d)$, it marks~$d$ 
  as~$\ktop(K)$, and considers the edge~$e$ of~$K$ that is missing in~$G'$. 
  If~$e$ is incident to~$d$ in~$K$, the procedure reinserts it and 
  removes from~$G'$ the other edge of~$K$ that crosses~$e$ in~$G^+$. 
  If~$e$ is not incident to~$d$, the procedure continues without varying~$G'$.
  We say that $u\prec v$ if $\sigma(u)<\sigma(v)$.
  
  We then compute a drawing of~$G^+$ by using the \EFPP algorithm. 
  Let vertex~$v=v_{k+1}$ be the next vertex to be placed according to~$\sigma$.
  Let~$\U(v)$ be the set of lower neighbors of~$v$, and let~$\Left(v)$ 
  and~$\Right(v)$ be the leftmost and the rightmost vertex in~$\U(v)$, 
  respectively. Also, denote by~$\A_l(v)$ the vertices to the top-left of~$v$, 
  and by~$\A_r(v)$ the vertices to the top-right of~$v$. If~$v$ is 
  not~$\ktop(K)$ for some kite~$K$, then~$v$ is placed by following the 
  rules of \FPP, that is, at the intersection of the~$\pm 1$ diagonals 
  through~$\Left(v)$ and~$\Right(v)$ after applying a suitable shift operation. 
  If~$v = \ktop(K)$ for some kite~$K$, the algorithm proceeds as follows. 
  Let~$K = (a,b,c,d)$ with~$v = d = \ktop(K)$. The remaining three 
  vertices of~$K$ are in~$G_k$ and are consecutive along the contour~$C_k$, 
  as they all belong to $\U(d)$ (by construction, $G'$ contains edge~$(a,d)$).
  W.l.o.g., assume that they are encountered in the order~$\{b,a,c\}$ from left
  to right.  The following cases are now possible: 
  
  \begin{figure}[t]
\centering
\subfigure[]{\includegraphics[scale=0.5,page=1]{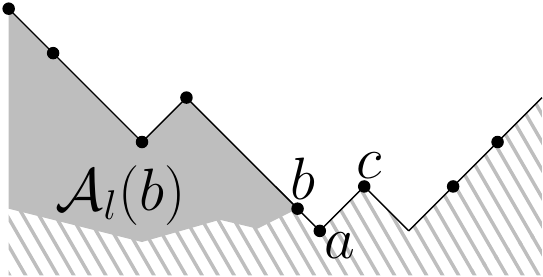} \label{fi:lift-a}}\hfil
\subfigure[]{\includegraphics[scale=0.5,page=2]{rac-undo}\label{fi:lift-b}}
\subfigure[]{\includegraphics[scale=0.5,page=3]{rac-undo} \label{fi:move-a}}\hfil
\subfigure[]{\includegraphics[scale=0.5,page=4]{rac-undo}\label{fi:move-b}}
\caption{(a-b) The lift operation: (a) Vertex $b$ is $r$ units below $c$. (b) Lifting $b$. (c-d) The move operation: (c) Vertex $d$ is $s$ units to the left of $b$. (d) Moving $d$. }
\label{fi:lift}
\end{figure}

  \smallskip\noindent{\bfseries Case 1:} $a \prec b$ and~$a \prec c$. This 
  implies that~$a = \Right(b)$ and~$a = \Left(c)$. The edges~$(a,b)$ and~$(a,c)$ 
  have slope~$-1$ and~$+1$, respectively, as they belong to~$C_k$. We now 
  aim at having~$b$ and~$c$ at the same $y$-coordinate, by applying a lift 
  operation. Suppose first that~$r = y(c) - y(b)>0$; see Figure~\ref{fi:lift-a}. 
  We apply the following steps:
  \begin{enumerate*}[label=(\roman{*})]
    \item Temporarily undo the placement of~$b$ and of all vertices 
      in~$\A_l(b)$. 
    \item Apply the shift operation to vertex~$\Right(b)=a$ by~$2r$ units to the
      right, which implies that the intersection of the diagonals 
      through~$\Left(b)$ and~$\Right(b)$ is moved by~$r$ units to the right and 
      by~$r$ units above their former intersection point. Hence,~$b$ and~$c$ are
      placed at the same $y$-coordinate; see also Figure~\ref{fi:lift-b}. 
    \item Reinsert the vertices of~$\A_l(b)$ and modify~$\sigma$ accordingly. 
      Namely, by definition, each vertex in~$\A_l(b)$ does not belong to~$\U(b)$
      and it is not an inner vertex below~$b$; therefore, vertices in~$\A_l(b)$
      can be safely removed. Hence,~$\sigma$ can be modified such that~$b\prec w$
      for each $w\in A_l(b)$.
  \end{enumerate*}
    If~$r = y(c) - y(b)<0$, a symmetric operation is applied:
  \begin{enumerate*}[label=(\roman{*})] 
    \item Undo the placement of~$c$ and of all vertices in~$\A_r(c)$. 
    \item Apply the shift operation to vertex~$\Right(c)$ by~$|2r|$ additional 
      units to the right. 
    \item Reinsert the vertices of~$\A_r(c)$. 
  \end{enumerate*}
  
  Finally, we place~$d$ vertically above~$a$. To this aim, we first apply the
  shift operation according to the insertion step of \FPP. After that, we may 
  need to apply a move operation; see Figure~\ref{fi:move-a}. 
  If~$s=x(d)-x(a)>0$, then we apply the shift operation to vertex~$\Right(d)=c$ 
  by~$2s$ units to the right and then place~$d$ (see Figure~\ref{fi:move-b}). 
  If $s=x(d)-x(a)<0$, then we apply the shift operation to vertex~$\Left(d)=b$ by~$2s$ 
  units to the left and then place~$d$ (clearly, the shift operation can be used
  to operate in the left direction with a procedure that is symmetric to the one
  that operates in the right direction). 
  Edges~$(a,d)$ and~$(b,c)$ are now vertical and horizontal, respectively. In 
  the next steps, their slopes do not change, as their endpoints are shifted 
  only horizontally (they do not belong to other kites); also,~$a$ is shifted 
  along with~$d$, as it belongs to~$\U(d)$.

  {\smallskip\noindent\bfseries Case 2}: $b \prec a \prec c$ or $c \prec a \prec b$. 
  We describe how to handle the case that $b \prec a \prec c$, as the other case 
  can be handled by symmetric operations. This 
  implies that~$b = \Left(a)$ and~$a = \Left(c)$. The edges~$(b,a)$ and~$(a,c)$ 
  both have slope~$+1$ respectively, as they belong to~$C_k$.  
  Let $\{z_1 \prec \ldots \prec z_r\}$ be the sequence 
  of $r \geq1 $ neighbors of $b$ inside $\A_r(b)$, with~$a = z_r$
  and~$b = \Left(z_i)$, with $1 \leq i \leq r$, as shown in Figure~\ref{fi:triang_sep}. 
  
  \begin{figure}[t]
    \centering
    \includegraphics[scale=0.8]{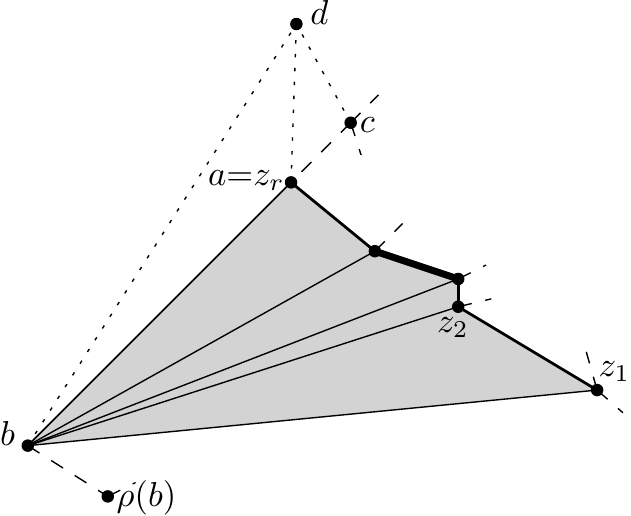}
    \caption{Illustration for the proof of Theorem.~\ref{th:rac-drawings}. The edge 
      with slope $\alpha_{min}$ is thicker.}\label{fi:triang_sep}
  \end{figure}
  
  Consider the slopes~$\alpha_i$ of the 
  edges~$(z_i, z_{i+1})$ for~$1 \leq i < r$, 
  and let~$\alpha_{\min}$ be the negative slope with the
  least absolute value among them; see the bold edge in Figure~\ref{fi:triang_sep}. 
  Let $s$ be the (negative) slope of the edge~$(b, \Right(b))$. We aim at 
  obtaining a drawing where~$|s|\leq |\alpha_{\min}|$. To this aim, we apply the 
  shift operation on~$\Right(b)$ by~$x$ units to the right, which stretches and 
  flattens the edge~$(b, \Right(b))$. If~$|\alpha_{\min}| = h/w$, 
  and~$|s| =  h'/w'$, then the value of~$x$ is the first even integer such 
  that $x \geq (h'w - hw')/h$. Now we have that 
  $|s|= h'/(w'+x)\le h'/(w'-h'w/h-w')=h/w=|\alpha_{\min}|$.
  The fact that~$x$ is even preserves the even 
  length of the edges on the contour. This preliminary operation will be useful 
  in the following.

  Next, let~$\Delta(b) = y(b) - y(\Right(b)) >0$
  and let~$\Delta(c) = y(c) - y(b)>0$ , i.e.,~$b$ lies~$\Delta(b)$ rows 
  above~$\Right(b)$, and~$c$ lies~$\Delta(c)$ rows above~$b$, where the 
  edges~$(b, a)$ and~$(a, c)$ have slope~$+1$. We apply the following procedure 
  to lift $b$ at the same $y$-coordinate of $c$. 
  \begin{enumerate*}[label=(\roman*)]
    \item We undo the placement of all vertices in $\A_l(b)$. 
    \item If~$\Delta(c)$ is not a multiple of~$\Delta(b)$, 
      say $\Delta(b) + \delta = q \cdot \Delta(c)$ for some integer~$q$, then we
      shift $\Right(c)$ by~$2\delta$ units to the right. This
      implies that~$c$ moves by~$(\delta,\delta)$ above its former position. 
    \item We set the $y$-coordinate of vertex~$b$ equal to the $y$-coordinate 
      of~$c$. To that end, we stretch the edge~$(\Right(b),b)$ by the 
      factor~$q$. Let~$w'$ be the width and~$h'$ be the height of the
      edge~$(\Right(b),b)$. The new edge has the same slope as before, and has 
      width~$q w'$ and height~$q h'$. This implies shifting all
      vertices $\Right(b),z_1,\ldots,z_{r-1},a,c$ by~$(q-1) w'$ units to the 
      right. Vertex~$b$ may need a further adjustment by a single unit left 
      shift if~$b= \Left(d)$ and the intersection point of the $\pm 1$ diagonals
      through~$\Left(d)$ and~$\Right(d)$ is not a grid point.  Also, we apply 
      the shift operation on~$\Left(b)$ by~$(q-1) h'$ units to the left. This 
      particular lifting operation applied on vertex $b$ preserves planarity, 
      which could be violated only by edges incident to~$b$.
      Namely, if vertex~$w$ is a neighbor of~$b$ in~$\U(b)$, then the
      edge~$(w,b)$ is vertically stretched by~$(q-1)h'$ units.
      This cannot enforce a crossing, since it means a vertical shift of~$w$. 
      Clearly,~$b$ can see~$\Right(b)$, since the edge was stretched.
      Consider the upper right neighbors~$z_1, \ldots, z_r$ with~$z_r=a$ of~$b$.
      The edges~$(b, z_i)$ change direction from right upward to right
      downward. Since the absolute value of the slope of the 
      edge~$(b,\Right(b))$ is bounded from above by $\alpha_{\min}$ and 
      since~$y(\rho(b))\le y(b)\le y(z_i); 1\le i\le r$, the new
      position of~$b$ is below the line spanned by each edge~$(z_i, z_{i+1})$ 
      for~$1 \leq i <r$. Hence,~$b$ can see each such neighbor~$z_i$, 
      including~$a = z_r$.  The lifting of~$b$ has affected all 
      vertices~$v \in \A_l(b)$ with~$y(v)<y(d)$. 
    \item We re-insert the vertices in $\A_l(b)$, by changing $\sigma$
      accordingly, as already explained for {\bfseries Case 1}. 
  \end{enumerate*}  

  Finally, we place~$d = \ktop(K)$. First, we place~$d$ at the intersection
  point of the $\pm 1$ diagonals through~$\Left(d)$ and~$\Right(d)$. Then, we 
  adjust~$d$ such that it lies vertically above~$a$. If the preliminary position
  of~$d$ is~$t$ units to the left (right) of~$a$, then we apply  the shift 
  operation on~$\Right(d)$ by $2t$ units to the right (on $\Left(d)$ by $2t$ 
  units to the left). 
    
  {\smallskip\noindent\bfseries Case 3}: $b \prec a$ and $c \prec a$. 
  This implies that~$b = \Left(a)$ and~$c = \Right(c)$. The edges~$(b,a)$ and~$(c,a)$ 
  have slope~$+1$ and~$-1$, respectively, as they belong to~$C_k$. We now 
  aim at having both~$b$ and~$c$ at the $y$-coordinate $y(a)+1$.
  To this end, we use the procedure described in \textbf{Case 2}
  to lift~$b$ upwards by $y(a)+1-y(b)$ rows by using a dummy vertex $c'$ at
  position $(x(c),y(a)+1)$ as a reference point. Note that this lifting only 
  affects~$b$ and the vertices in~$\A_l(b)$ (all other vertices are moved 
  uniformly), so both~$a$ and~$c$ remain at their position. Hence, we now have
  the situation $c \prec a \prec b$ and can again use the procedure of 
  \textbf{Case 2} to solve this case.

  \smallskip To conclude the proof, we need to consider the first edge of the construction
  which is drawn horizontal. Since the lift operation requires an edge that does 
  not have slope~0, we may need to introduce dummy vertices and edges. 
  Namely, if there is a kite including the base edge~$(v_1,v_2)$, then we add 
  two dummy vertices~$1',2'$ below it that form a new base edge~$(v_1',v_2')$. 
  We add the additional edges $(v_1',v_1)$, $(v_1',v_2)$, $(v_1',v_n)$, 
  $(v_2',v_2)$ and $(v_2',v_n)$ to make the graph maximal planar. These dummy 
  vertices and edges will be removed once the last vertex $v_n$ is placed.
  
  In terms of time complexity, $G^+$ can be computed in $O(n)$ time, by
  Lemma~\ref{lem:3-connected}. Furthermore, each shift, move and the lift 
  operation can be implemented in $O(n)$ time, hence the placement of a single
  vertex costs $O(n)$ time. However, in some cases (in particular, when we are 
  placing the top vertex of a kite), we may need to undo the placement of a set 
  of vertices and re-insert them afterwards. Since when we undo and reinsert a 
  set of vertices we also update $\sigma$ accordingly, this guarantees that the 
  placement of the same set of vertices will not be undone anymore. Thus, the 
  reinsertion of a set of $O(n)$ vertices costs $O(n^2)$. Hence, we 
  have $\sum^n_{i=1} O(n + n^2)$ which gives an $O(n^3)$ time complexity.
\end{proof}
  
Figure~\ref{fi:dfpp} shows a running example of our algorithm.
Theorem~\ref{th:rac-drawings} and the fact that there exist $n$-vertex \Rac graphs 
with~$4n-10$ edges~\cite{del-dgrac-2011} while an $n$-vertex IC-planar graph 
has at most~$13n/4-6$ edges~\cite{zl-spgic-CEJM13} imply
that IC-planar graphs are a proper subfamily of \Rac graphs.


\begin{figure}[h!]
  \centering
  \subfigure[]{\includegraphics[scale=0.55,page=1]{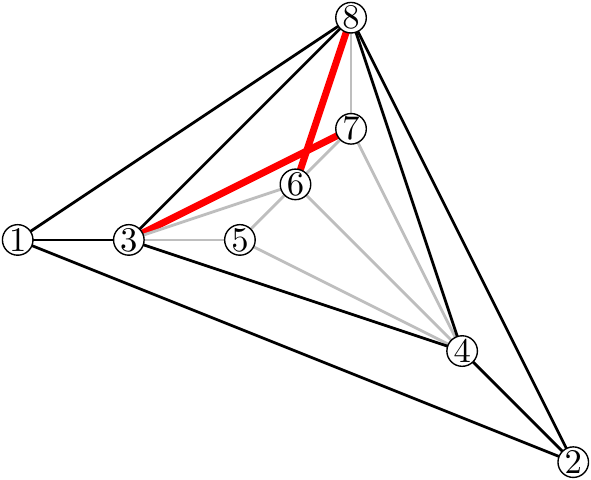}\label{fi:dfpp-1}}\hfill
  \subfigure[]{\includegraphics[scale=0.55,page=2]{dfpp}\label{fi:dfpp-2}}\hfill
  \subfigure[]{\includegraphics[scale=0.55,page=3]{dfpp}\label{fi:dfpp-3}}
  \subfigure[]{\includegraphics[scale=0.6,page=4]{dfpp}\label{fi:dfpp-4}}
  \subfigure[]{\includegraphics[scale=0.75,page=5]{dfpp}\label{fi:dfpp-5}}
  \caption{Example run of our algorithm on an IC-planar graph~$G$ with a 
    separating triangle. The crossing edges are drawn bold, the edges inside
    the separating triangle are drawn gray. 
    (a) Input graph~$G$. 
    (b) Output of \FPP after vertex~7. 
    (c) Output of \FPP after vertex~8. 
    (d) Lifting~3 to the level of~7. 
    (e) Moving~8 directly above~6.}
  \label{fi:dfpp}
\end{figure}

We now show that exponential area is required for \Rac drawings of IC-planar 
graphs. Since the vertices are not drawn on the integer grid, the drawing area 
is measured as the proportion between the longest and the shortest edge.

\newcommand{\thRacArea}{
For every integer $k \geq 1$, there exists an IC-plane graph $G_k$ with $n_k$ vertices such that every IC-planar straight-line \Rac drawing of $G_k$ takes area $\Omega(q^{n_k})$, for some constant $q>1$.  
  }

\begin{theorem}\label{th:rac-area}
  \thRacArea
\end{theorem}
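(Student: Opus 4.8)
The plan is to exhibit a family of IC-plane graphs $G_k$ obtained by nesting $k$ copies of a single gadget, each gadget containing a kite (an X-configuration), and then to show that the right-angle-crossing requirement forces a constant-factor geometric contraction from each gadget to the next. Composing these contractions over the $k$ levels yields a ratio between the longest and the shortest edge that grows like $q^k$, and since the gadget has constant size we have $n_k = \Theta(k)$, giving the claimed $\Omega(q^{n_k})$ bound. The conceptual point I would stress is the contrast with Theorem~\ref{th:straightline}: without the right-angle constraint an IC-planar graph always has a quadratic-area straight-line drawing, so the exponential blow-up must be extracted precisely from the perpendicularity of the crossings.

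First I would design the gadget and fix the nesting. Each gadget carries a kite $(a,b,c,d)$ whose two crossing edges must meet at a right angle, together with a separating boundary (a triangle, or the outer $4$-cycle of the kite) inside which the next gadget is forced to lie. The embedding rigidity is the easy part: the boundary edges are non-crossed, so by IC-planarity no edge of an inner gadget may cross them, and hence in every IC-planar embedding the vertices of gadget $i+1$ are confined to a prescribed bounded face of gadget $i$. Because distinct kites share no vertices in an IC-plane graph, the crossings of different levels are independent and the whole configuration is indeed IC-planar. Thus it suffices to produce a single contraction factor bounded away from $1$ per level.

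The heart of the argument is a one-level contraction lemma: in any straight-line RAC drawing of the gadget, a characteristic length of gadget $i+1$ (say the length of a designated edge, or the diameter of the face hosting the next gadget) is at most $\ell_i/q$ for some absolute constant $q>1$, where $\ell_i$ is the corresponding length of gadget $i$. The key use of the right angle is that it removes the freedom to keep the inner region almost as large as the outer one: a slanted crossing could be flattened so that the enclosed face has nearly the full outer diameter (this is exactly what happens in the quadratic-area construction of Theorem~\ref{th:straightline}), whereas a perpendicular crossing rigidly ties the two diagonal directions together. Combined with the convex-position constraints that planarity imposes on $a,b,c,d$ and on the hosting face, this perpendicularity caps the inner-to-outer linear scale by a fixed fraction, up to rotation and uniform scaling of the whole drawing.

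Finally I would compose the per-level bounds. Writing $\ell_1,\dots,\ell_k$ for the characteristic lengths of the successive gadgets, the contraction lemma gives $\ell_{i+1}\le \ell_i/q$, hence $\ell_1/\ell_k \ge q^{\,k-1}$. Since every edge length lies between the shortest and the longest edge of the drawing, the ratio of longest to shortest edge is $\Omega(q^{k})=\Omega(q^{\Theta(n_k)})$, which after rescaling the base gives $\Omega(q^{n_k})$ for a suitable constant $q>1$. I expect the main obstacle to be the contraction lemma itself: one must pin down, up to similarity, the set of point configurations compatible with a perpendicular crossing inside the prescribed face, and show that every such configuration leaves only a constant fraction of the linear scale for the next level. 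The remaining ingredients — the embedding rigidity of the nesting and the multiplicative composition — are routine once this geometric core is established.
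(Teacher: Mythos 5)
Your high-level skeleton (a nested family of gadgets, a per-level scale factor bounded away from $1$, multiplicative composition, and measuring area as the longest-to-shortest edge ratio) matches the paper's proof of Theorem~\ref{th:rac-area}. But the entire content of the theorem sits in what you call the contraction lemma, and there is a genuine gap there: the gadget you propose --- a single kite per level, with the next level confined to a face of it --- does \emph{not} admit any such lemma. A perpendicular crossing alone does not cap the inner scale by a constant fraction. Concretely, take a kite with boundary vertices $a=(0,0)$, $c=(N,0)$ and crossing edges $(a,c)$ and $(b,d)$ with $b=(t,h)$, $d=(t,-h)$ crossing at $X=(t,0)$: the crossing is exactly $90^\circ$ for every choice of $t$, and by taking $t$ small the triangular face $(X,b,c)$ has diameter arbitrarily close to the full diameter $N$ of the gadget. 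So the set of RAC-compatible configurations is not rigid ``up to similarity,'' and the per-level ratio can be made as close to $1$ as desired; composing such levels yields nothing. This flattening degree of freedom is precisely the escape the paper has to block, and it cannot be blocked with one kite and convexity/planarity constraints alone.

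What the paper does instead is couple \emph{two} kites per level so that flattening one forces the other to grow. Each level of $G_k$ has a $4$-cycle $(A,B,C,D)$ as its boundary with two opposite \emph{marked} edges $(A,D)$ and $(B,C)$; a kite is attached to each marked edge, and the connector edges $(a,B)$, $(a,b)$, $(c,D)$, $(c,d)$ create visibility constraints ($a$ must lie above the line through $(A,B)$, $d$ below the line through $(C,D)$) which, together with the right-angle crossing, force the quadrilateral $(A,D,a,d)$ to contain a square with $(A,D)$ as a side --- an additive width increase by the full length $|AD|$. The paper explicitly addresses your degenerate case: one could stretch $(B,C)$ to shrink the kite on $(A,D)$, but the edges $(a,b)$ and $(c,d)$ then force the kite on $(B,C)$ to grow correspondingly, so at least one side pays. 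Bookkeeping with the shortest and longest sides $l_i,L_i$ of the outer rectangle gives $l_i\ge L_{i-1}+4$ and $L_i\ge l_{i-1}+L_{i-1}+2$, hence the area at least doubles per level and $A_k\ge 2^{n_k/8}$ with $n_k=8k$. (A minor additional caution: your claim that IC-planarity by itself pins the nesting is not quite right --- an edge may cross a non-crossed boundary edge once without violating IC-planarity; the paper works with a fixed IC-planar embedding, i.e., embedding-preserving drawings, which is what the statement about IC-\emph{plane} graphs $G_k$ licenses.) Without a gadget of this coupled kind, or some equivalent mechanism that defeats the flattening, your proof does not go through.
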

\begin{proof}
Refer to Figures~\ref{fi:exponential-1} and~\ref{fi:exponential-2} for the construction of $G_k$, for $k \geq 1$. Each graph $G_i$, for $1 \leq i \leq k$, has a 4-cycle as outer face, while all other faces are triangles (a triangle is composed of either three vertices or of two vertices and one crossing point). Two non-adjacent edges of the outer face are called the \emph{marked} edges of $G_i$.

\begin{figure}[t]
\centering
\subfigure[The first three levels of the construction.]{\includegraphics[page=1]{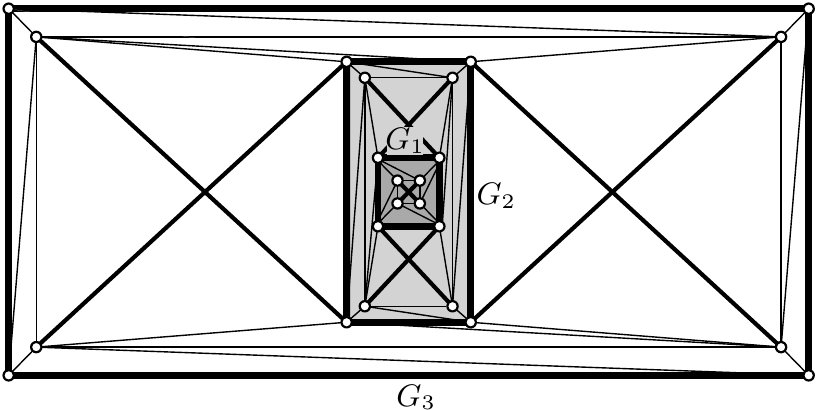}\label{fi:exponential-1}}
\subfigure[Going from level $i-1$ to level $i$.]{\includegraphics[page=2]{area-G-2}\label{fi:exponential-2}}
\caption{Illustration for the proof of Theorem~\ref{th:rac-area}. }\label{fi:exponential}
\end{figure} 

Graph~$G_1$ has~8 vertices,~4 inner vertices forming a kite, and~4 vertices 
on the outer face. All inner faces are triangles. The two marked edges of the 
outer face of~$G_1$ are any two non-adjacent edges of this face.  Graph~$G_{i}$ 
is constructed from~$G_{i-1}$ as follows, see also Figure~\ref{fi:exponential-2}. 
Let $\{A,B,C,D\}$ be the four vertices of the outer face of~$G_{i-1}$, and 
let~$(A,D)$ and~$(B,C)$ be the two marked edges of~$G_{i-1}$ (bold in 
Figure~\ref{fi:exponential-2}). We attach a kite $(A,D,a,d)$ on the marked 
edge~$(A,D)$ and a kite $(B,C,b,c)$ on the marked edge~$(B,C)$. We connect the 
two kites with the edges~$(a,B)$, $(a,b)$, $(c,D)$, and $(c,d)$. We 
then add a cycle between four further vertices $\{\alpha,\beta,\gamma,\delta\}$ 
that form the outer face of $G_i$, and we triangulate the inner face between 
the cycles $(\alpha,\beta,\gamma,\delta)$ and $(a,b,c,d)$. We set the 
edges $(\alpha,\beta)$ and $(\gamma,\delta)$ as the marked edges of $G_i$.

An embedding-preserving straight-line \Rac drawing $\Gamma_k$ 
of $G_k$ ($k \geq 1$) with minimum area can be obtained by drawing each kite as 
a quadrilateral, as shown in Figures~\ref{fi:exponential-1} and~\ref{fi:exponential-2}. 
Since vertex~$a$ is connected to vertex~$B$, $a$ has to lie above the line 
spanned by the edge $(A,B)$. Further, since vertex~$d$ is connected to vertex~$C$, 
$d$ has to lie below the line spanned by the edge $(C,D)$. This implies that
the quadrilateral $(A,D,a,d)$ contains the square that has the edge~$(A,D)$
as its right side. Hence, the width increases by at least the length of edge~$(A,D)$.
Note that one might extend the edge~$(B,C)$ to make~$(A,D,a,d)$ 
smaller. However, since we also have the edges~$(a,b)$ and~$(c,d)$, this 
procedure increases the size of~$(B,C,b,c)$ such that, as soon as $(a,d)$ is
smaller than~$(A,D)$, it has to be that~$(b,c)$ is larger than~$(B,C)$
(to keep the visibility required for~$(a,b)$ and~$(c,d)$).
Then, this implies that the quadrilateral $(B,C,b,c)$ contains the square that 
has the edge~$(B,C)$ as its left side, so the width increases by at least the 
length of edge~$(B,C)$.

The minimum-area drawing forces the outer face of every subgraph 
$G_i$ ($i \leq k$) of~$G_k$ to be a rectangle $R_i$. We denote by $l_i$  ($L_i$) 
the length of the shortest (longest) side of $R_i$. The area of $R_i$ is 
$A_i = l_i \times L_i$. Observe that, by construction, the marked edges of $G_i$ 
have length $L_i$. It follows that $l_{i} \ge L_{i-1} +4$ and $L_{i}\ge l_{i-1} + L_{i-1}+2$. Therefore, 
$A_{i} \ge (L_{i-1} +4) \times (l_{i-1} + L_{i-1}+2) \geq A_{i-1} +A_{i-1} 
= 2 A_{i-1}$, hence $A_k \geq 2 A_{k-1} \geq 2^{k-1} A_1 \geq 2^{k+1}$. 
The number of vertices of $G_k$ is $n_k = 8k$, and hence $k = \frac{n_k}{8}$. Thus $A_k  \geq 2^{\frac{n_k}{8}}\geq 1.09^{n
_k}$, which proves the statement.
\end{proof}

\section{Recognizing IC-planar graphs}\label{se:recognition}

The \emph{IC-planarity testing} problem asks if a graph~$G$ admits an IC-planar embedding.
\medskip

{\noindent \bfseries Hardness of the problem.}  The next theorem shows that IC-planarity testing is \NP-complete. 

\newcommand{\thNpVar}{IC-planarity testing is \NP-complete.}
\begin{theorem}\label{th:np-hard}
  \thNpVar
\end{theorem}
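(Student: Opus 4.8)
The plan is to show \NP-completeness by establishing membership in \NP and then giving a polynomial-time reduction from a known \NP-hard problem. Membership is the easy direction: a certificate is an IC-planar embedding, which can be encoded as a rotation system together with, for each crossing, the pair of edges that cross; one verifies in polynomial time that the drawing is 1-planar and that the crossing edges form a matching (no two crossed edges share an endpoint). The real work is the hardness reduction.

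For the reduction I would start from a variant of planarity-style problems that already involve augmenting a planar structure with an independent set of ``extra'' edges, since IC-planarity is precisely about drawing a planar-like graph plus a matching of crossing edges. A natural candidate is a reduction from a known \NP-hard crossing-type or partition-type problem; the 1-planarity testing hardness proofs~\cite{DBLP:journals/algorithmica/GrigorievB07,DBLP:journals/jgt/KorzhikM13} reduce from \textsc{3-Partition} or from a \textsc{SAT} variant, and I would try to adapt such a gadget-based construction. The key idea is to design gadgets in which the only freedom is whether certain designated edge pairs cross, and to force these crossings to encode truth assignments or a combinatorial selection, while the matching condition on crossing edges ensures that each gadget contributes an \emph{independent} choice. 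Variable gadgets would offer two rotational/embedding states (true/false), clause gadgets would be satisfiable as an IC-planar subdrawing if and only if at least one literal is true, and connecting edges would transmit the chosen state without allowing adjacent crossed edges.

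The main obstacle, and where most of the care is needed, is enforcing the \emph{independent-crossing} constraint globally: in ordinary 1-planarity each edge may be crossed, but here no two crossing edges may share an endpoint, so the gadgets must be engineered so that a satisfying configuration never requires two crossings incident to a common vertex, while every non-satisfying configuration is forced into such a conflict (or into a non-1-planar state). I would build the gadgets out of rigid, heavily triangulated pieces (kites and their empty, plane-triangulated neighborhoods, in the spirit of Property~\ref{pr:char-crossins} and Lemma~\ref{lem:3-connected}) so that the embedding of most of the graph is essentially fixed, leaving only the intended binary choices free. Correctness then splits into the two standard directions: a satisfying assignment yields an IC-planar embedding by placing each gadget in the corresponding state, and conversely any IC-planar embedding, because the rigid parts admit essentially one embedding, must localize its crossings inside the gadgets in a way that reads off a consistent satisfying assignment. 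Finally I would verify that the construction has polynomial size and can be built in polynomial time, completing the reduction and hence the \NP-completeness claim.
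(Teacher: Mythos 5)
Your membership argument matches the paper's and is fine. The hardness part, however, is not a proof but a research plan: you never actually define a variable gadget, a clause gadget, or the connecting structure, and you never prove the two correctness directions---every crucial claim is phrased as ``the gadgets would be engineered so that\dots''. The step you explicitly flag as the main obstacle (globally enforcing that no two crossing edges share an endpoint, while keeping the rest of the embedding rigid in the \emph{variable} embedding setting) is precisely the part that carries all the difficulty, and it is left entirely open. Rigidity is especially delicate here: with a free embedding you cannot simply assert that ``heavily triangulated pieces'' have essentially one embedding; that kind of argument is what the paper develops, with considerable machinery (M-graphs, membranes, barriers, ropes), only for the \emph{fixed rotation system} variant in Theorem~\ref{th:np-rot}. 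So as written, the proposal establishes membership in \NP and nothing more.

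Moreover, the route you chose is needlessly hard. The paper proves Theorem~\ref{th:np-hard} by reducing from \emph{1-planarity testing itself}, which is already known to be \NP-complete~\cite{DBLP:journals/algorithmica/GrigorievB07,DBLP:journals/jgt/KorzhikM13}; this completely sidesteps the need for any SAT- or 3-Partition-style gadgetry. Each edge $(u,v)$ of the 1-planarity instance $G$ is replaced by two $3$-cycles $T_{uv}$ and $T_{vu}$ (anchored at $u$ and $v$) joined by an \emph{attaching edge} $(a_{uv},a_{vu})$. The one key combinatorial fact is that at most one edge of a $3$-cycle can be crossed in an IC-planar drawing. Given a 1-planar drawing of $G$, one shrinks the triangles and places the crossing of each edge on its attaching edge; since attaching edges are pairwise independent, the result is IC-planar. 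Conversely, given an IC-planar drawing of $G^*$, a crossed triangle edge such as $(u,a_{uv})$ can be rerouted along the two uncrossed edges of its triangle, normalizing the drawing so that all crossings lie on attaching edges; contracting the gadgets then yields a 1-planar drawing of $G$, after one local reroute to repair the single degenerate case in which two attaching edges sharing an endpoint vertex of $G$ cross each other (legal in $G^*$ but yielding adjacent crossing edges in $G$). If you want to salvage your write-up, either carry out this short reduction, or be prepared to build the full membrane-style machinery---but recognize that in the paper that construction is the proof of the rotation-system variant, not of this theorem.
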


\begin{figure}[tb]
\centering
\subfigure[$G$]{\includegraphics[scale=1]{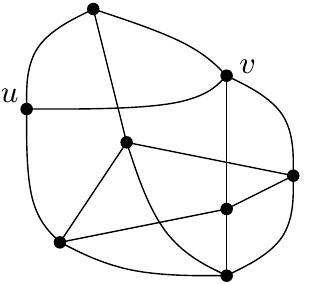}\label{fig:1planar-instance}}\hfil
\subfigure[$G^*$]{\includegraphics[scale=1]{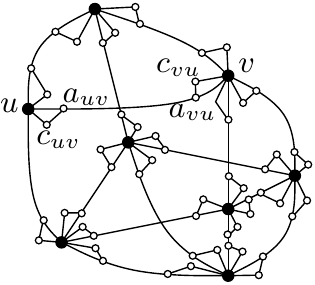}\label{fig:ICplanar-instance}}
\caption{Illustration of the proof of Theorem~\ref{th:np-hard}.}
\label{fi:reduction-gadget}
\end{figure}

\begin{proof}
  IC-planarity is in \NP, as one can guess an embedding and
  check whether it is IC-planar~\cite{gj-1983}. For the hardness proof, the reduction is from  
  the \emph{1-planarity testing} problem, which asks whether a given graph is 1-planar or not.
  The reduction uses a $3$-cycle gadget and exploits the fact that at most one edge of 
  a $3$-cycle is crossed in an IC-planar drawing.  We transform an instance~$G$ of 1-planarity testing 
	into an instance~$G^*$ of IC-planarity testing, by
  replacing each edge~$(u,v)$ of $G$ with a graph~$G_{uv}$ 
	consisting of two $3$-cycles,~$T_{uv}$ and~$T_{vu}$, with vertices~$\{u, c_{uv}, a_{uv}\}$ 
  and~$\{v, c_{vu}, a_{vu}\}$, respectively, plus edge~$(a_{uv},a_{vu})$, called the \emph{attaching edge} of~$u$ 
  and~$v$; see Figure~\ref{fi:reduction-gadget}. 
  
  Let~$\Gamma$ be a 1-planar drawing of~$G$. An IC-planar drawing~$\Gamma^*$ 
	of~$G^*$ can be easily constructed by replacing each curve representing an 
	edge~$(u,v)$ in~$\Gamma$ with a drawing of~$G_{uv}$ where~$T_{uv}$ 
  and~$T_{vu}$ are drawn planar and sufficiently small, such that the possible
  crossing that occurs on the edge~$(u,v)$ in~$\Gamma$ occurs on the attaching
  edge~$(a_{uv},a_{vu})$ in~$\Gamma^*$. Hence, since all the attaching edges are
  independent,~$\Gamma^*$ is IC-planar.

	Let~$\Gamma^*$ be an IC-planar drawing of~$G^*$. We show that it is possible 
	to transform the drawing in such a way that all crossings occur only between 
	attaching edges. Once this condition is satisfied, in order to construct a
	1-planar drawing~$\Gamma$ of~$G$, it suffices to remove, for each 
  edge~$(u,v)$, the vertices~$c_{uv}$ and~$c_{vu}$, and to replace~$a_{uv}$
  and~$a_{vu}$ with a bend point. Namely, as already observed, no more than one 
  edge can be crossed for every gadget~$T_{uv}$ of~$G^*$.  Suppose now that the
  edge~$(u,a_{uv})$ of~$T_{uv}$ is crossed. Since the other two edges 
  of~$T_{uv}$ are not crossed, we can reroute~$(u,a_{uv})$ such that it follows
  the curves that represent~$(u,c_{uv})$ and~$(c_{uv},a_{uv}$); see 
  Figure~\ref{fi:reduction-reroute-a}.
  
  \begin{figure}[tb]
    \centering
    \subfigure[]{\includegraphics[scale=0.8]{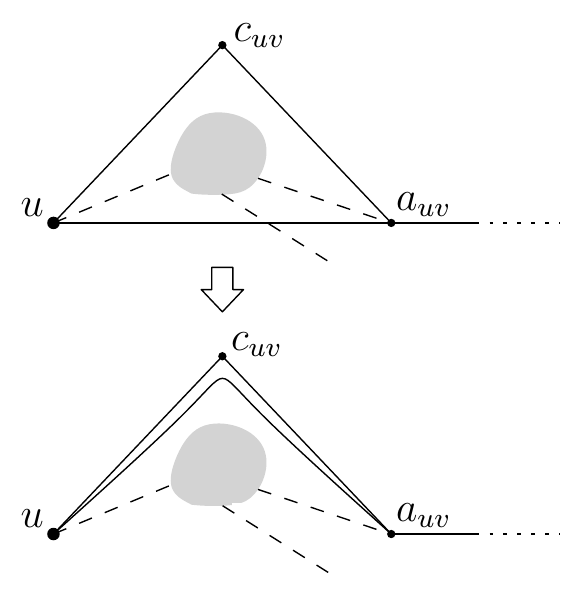}\label{fi:reduction-reroute-a}}
    \subfigure[]{\includegraphics[scale=0.8]{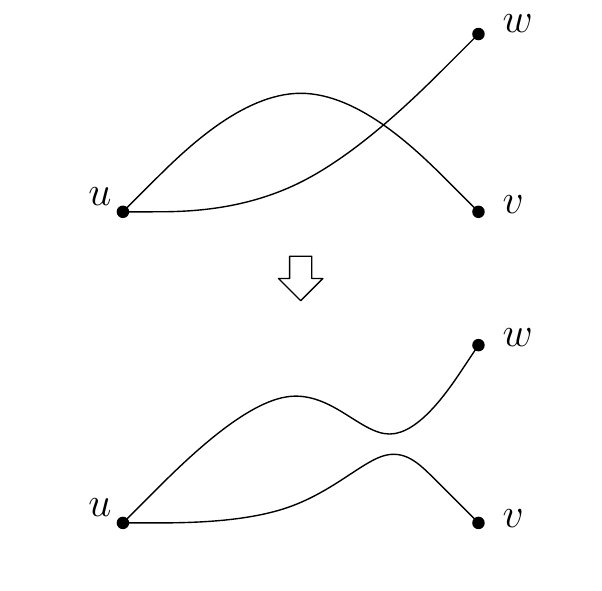}\label{fi:reduction-reroute-b}}
    \caption{Illustration of the proof of Theorem~\ref{th:np-hard}. 
      (a) Rerouting the crossed edge $(u,a_{uv})$ via~$c_{uv}$ to be planar. 
      (b) Rerouting the crossing edges~$(u,v)$ and~$(u,w)$ to be planar.}
    \label{fi:reduction-reroute}
  \end{figure}
  
  In order to complete the proof, we need to take care of the following 
	particular case. Two attaching edges~$a_{uv}$ and~$a_{uw}$ that cross and that 
	are connected to two gadgets~$T_{uv}$ and~$T_{uw}$ with a common vertex~$u$
	represent a valid configuration in~$\Gamma^*$, while they give rise to a 
	crossing between two adjacent edges in~$\Gamma$, which is not allowed 
	since~$\Gamma$ must be a simple drawing. However, this case can be easily 
	solved, since the two edges do not cross any others, by rerouting them in~$\Gamma$ as shown in 
	Figure~\ref{fi:reduction-reroute-b}.
\end{proof}

\smallskip Note that this construction does not work for IC-planarity 
testing with a given rotation system since the rerouting step changes
the rotation system. However, we now prove \NP-hardness of IC-planarity testing for graphs
with a given rotation system. We rely on the membrane technique introduced by Auer {\em et
al.}~\cite{JGAA-347} to prove the \NP-hardness of 1-planarity testing for graphs
with a given rotation system. In particular, we design gadgets that make it possible to use the membrane technique in the case of IC-planar graphs. 

First, we replace the U-graphs~\cite{JGAA-347} by  M-graphs, called \emph{mesh graphs}. These graphs have a unique embedding with a fixed rotation system. Namely, an M-graph is a mesh, where cells are filled with two crossing edges, following a checkerboard pattern to ensure independent crossings; see Figure~\ref{fi:m-graph}. To see that with a given rotation system, an M-graph has a unique embedding, observe that each subgraph isomorphic to $K_4$ can be embedded planarly or as a kite, and this is determined by the rotation system~\cite{Kyncl20091676}. The rotation system defined by the drawing in Figure~\ref{fi:m-graph} implies that each subgraph isomorphic to $K_4$ is embedded as a kite, and therefore the embedding of an M-graph is unique. 

Let~$M$ be an M-graph with a given fixed  embedding. At its bottom line,~$M$ has sufficiently many \emph{free} vertices that are not incident with a crossing edge in~$M$. These vertices are consecutively ordered, say from left to right. The edges on the bottom line are not crossed in any IC-planar embedding of~$M$, so they are crossing-free in the given embedding. Finally,~$M$ cannot be crossed by any path from a free vertex.  In what follows, we attach further gadgets to~$M$ by connecting these gadgets to consecutive free vertices. If there are several gadgets, then they are separated and placed next to each other.

\begin{figure}[tb]
\centering
\subfigure[]{\includegraphics[scale=0.3]{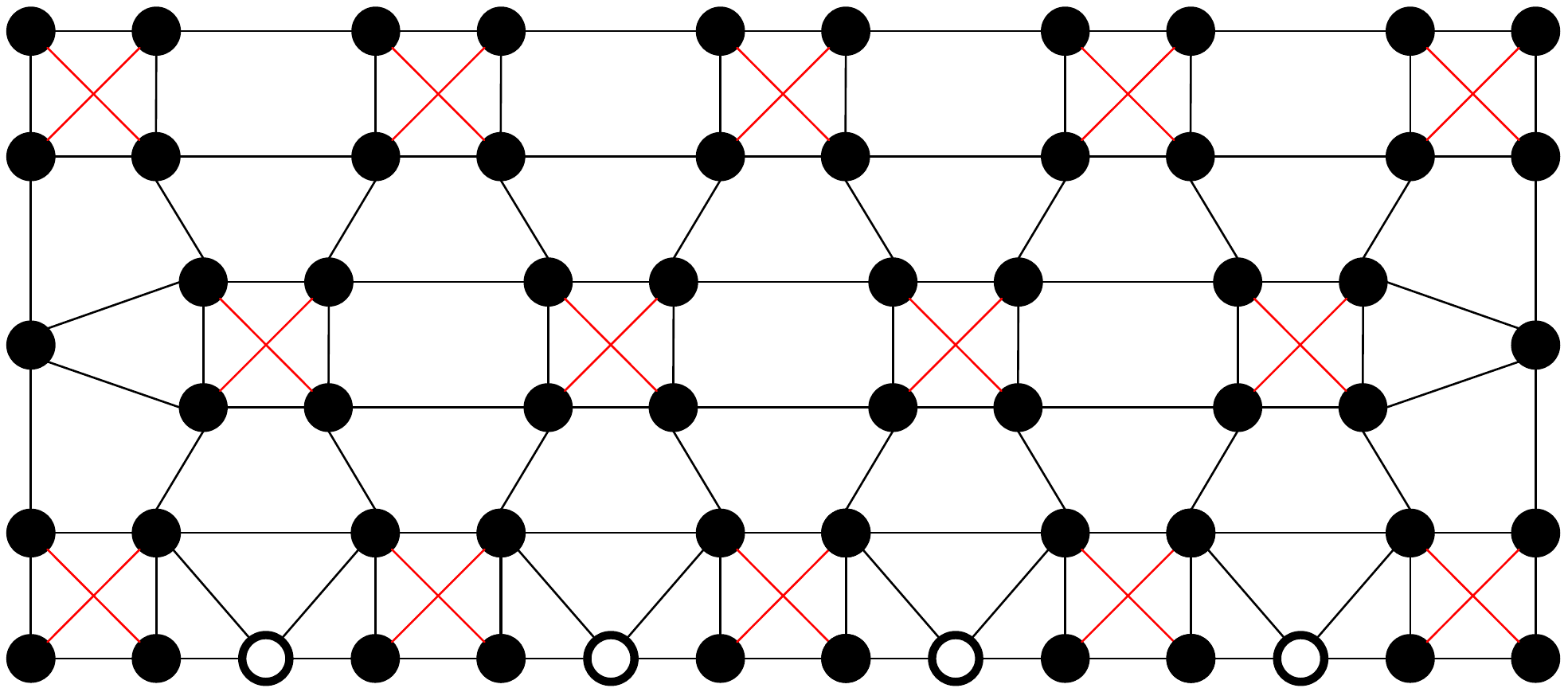}}\hfil
\subfigure[]{\includegraphics[scale=0.4]{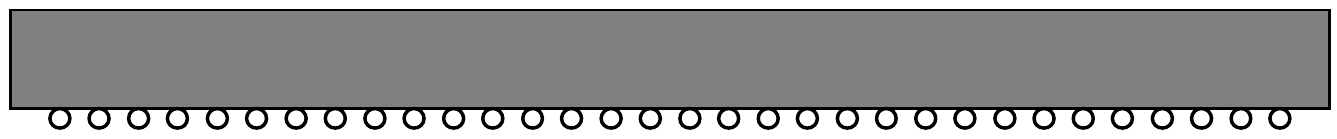}}
\caption{The structure of an M-graph (a) and its abbreviation (b). }
 \label{fi:m-graph}
\end{figure}

\paragraph{General Construction}

Consider an instance~$\alpha$ of planar-3SAT with its corresponding plane
graph~$G$ and its dual~$G^*$.  Recall that the vertices of~$G$ represent
variables~$x$ and clauses~$c$, also, there is an edge~$(x,c)$ if~$x$ or its negation occurs as a
literal in~$c$; see Figure~\ref{fi:planar3sat}. We transform $G^*$ into an 
\emph{M-supergraph} $G^*_{\alpha}$ (see 
Figure~\ref{fi:m-supergraph}) as follows.

Each vertex of~$G^*$, corresponding to a face of the embedded graph~$G$, is 
replaced by a sufficiently large M-graph. Further, each edge of~$G^*$ is 
replaced by a \emph{barrier} of~$l$ parallel edges between~$l$ 
consecutive free vertices on the boundaries of the two M-graphs of the
adjacent faces. These edges will be 
crossed by paths that are called \emph{ropes}. The size of~$l$ will be 
determined later.

For each variable~$x$, we construct a \emph{V-gadget} $\gamma(x)$, and 
similarly we build a \emph{C-gadget} for each clause. These gadgets are described 
below.

\begin{figure}[t]
\centering
\subfigure[]{\includegraphics{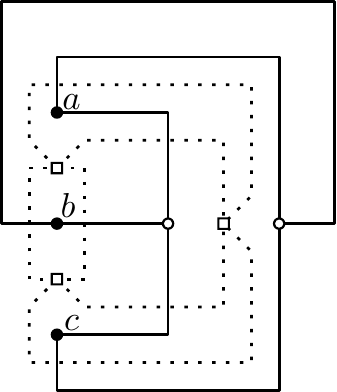}\label{fi:planar3sat}}\hfil
\subfigure[]{\includegraphics[scale=0.4]{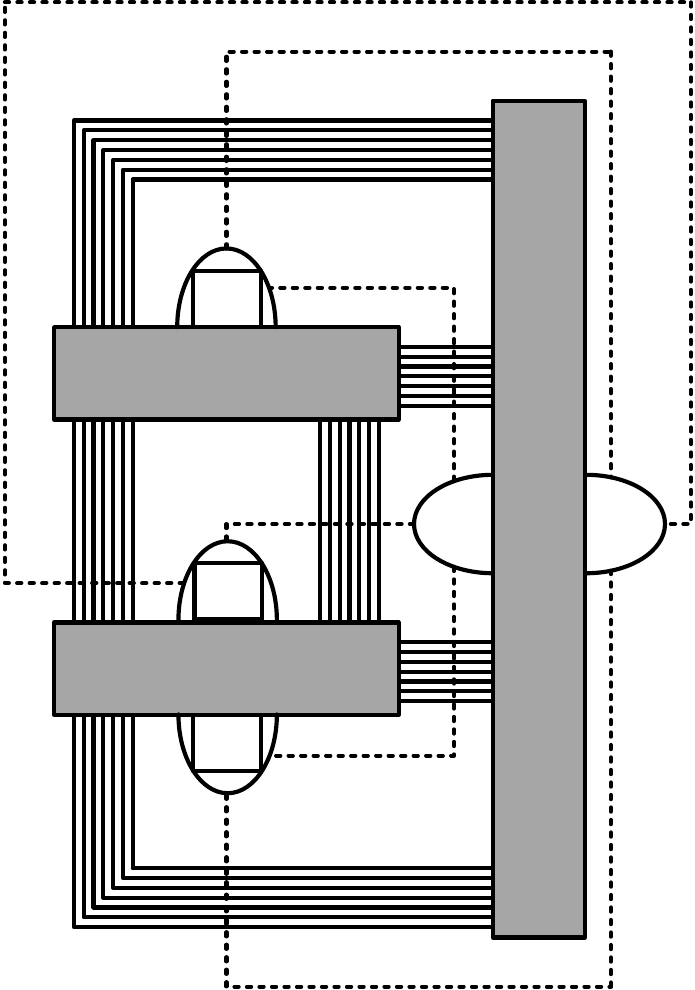}\label{fi:m-supergraph}}
\caption{(a) The planar graph~$G$ (solid) and its dual~$G^*$ (dotted) corresponding to the planar-3SAT formula
    $(a\vee b\vee\neg c)\wedge(a\vee \neg b\vee c)$. (b) The corresponding M-supergraph~$G^*_\alpha$ with the clause gadgets
      (vertical) and the variable gadgets (horizontal).}  
\end{figure}

Each vertex~$u$ of $G$ lies on the boundary of a face~$f$ of~$G$.
We attach the gadget~$\gamma(u)$ of~$u$ to the M-graph at $f$ 
so that~$\gamma(u)$ lies in~$f$. 
It does not matter which face $f$ incident to $u$ is chosen. Similarly,
each edge~$e$ of~$G$ between a variable and a clause is replaced by
a rope of length~$2l + 3$. Since~$e$ is crossed by its dual edge, the rope is 
crossed by a barrier. A rope acts as a communication line that
``passes'' a crossing at a V-gadget across a barrier to a
C-gadget. We denote by a \emph{membrane} (similarly as
in~\cite{JGAA-347}) a path between free vertices on the boundary of a single
M-graph, or between particular vertices of a variable gadget. We call a
vertex \IN if it is placed inside the region of a membrane and the
boundary of the M-graph in an IC-drawing, and \OUT otherwise. 
\IN and \OUT are exactly defined by edges which cross the membrane. 
Note that the framework is basically a simultaneous embedding of~$G$ and~$G^*$
by means of our gadgets, M-graphs, barriers and ropes. The subgraph without 
V- and C-gadgets is 3-connected, since the M-graphs are 3-connected and 
the barriers have size~$l$ for~$l \geq 3$, and it has a unique planar embedding 
if one edge from each pair of crossing edges in each M-graph is removed. 

\paragraph{Construction of the C-gadgets} 

The C-gadget~$c = (l_1, l_2, l_3)$ with three literals~$l_1$,~$l_2$ and~$l_3$
is attached to eight consecutive free boundary vertices of an
M-graph~$M$, say $v_1, \ldots, v_8$. For each literal~$l_i$, there are three 
vertices~$u_i$,~$a_i$ and~$b_i$, and four edges~$(u_i, a_i)$, $(u_i, b_i)$, 
$(a_i, v_{2i})$ and $(b_i,v_{2i+1})$, where~$u_i$ is the initial vertex of the 
rope towards the corresponding variable gadget. There is a membrane of nine 
edges from~$v_1$ to~$v_8$, see Figure~\ref{fi:clause-gadget}.

By construction, at most two vertices among~$u_1$,~$u_2$ and~$u_3$ can be moved 
outside the membrane, and at least one initial vertex of a rope (and maybe
all) must be \IN. \IN shall correspond to the value 
\true~of the literal and thus of the clause.

\paragraph{Construction of the V-gadgets} 

Let~$x$ be a variable and let~$v$ be the vertex corresponding
to~$x$ in~$G$. Suppose that the literal~$x$ occurs in~$k$ clauses
for some~$k \geq 1$, which are ordered by the embedding of~$G$.
Denote this sequence by~$x_1, \ldots x_k$, where each~$x_i$
corresponds to~$x$ or~$\neg x$. The V-gadget of~$x$ is
$\gamma(x) = \gamma(x_0), \gamma(x_1), \ldots, \gamma(x_k),
\gamma(x_{k+1})$. This gadget is connected to~$7(k+2)$ free consecutive vertices 
on the border of the M-graph~$M$ to which it is attached; see 
Figure~\ref{fi:variable-gadget} for an illustration. The gadgets~$\gamma(x_0)$
and~$\gamma(x_{k+1})$  are called the
(left and right) \emph{terminal gadgets} and each~$\gamma(x_i)$ ($1
\leq i \leq k$) is called a
\emph{literal gadget}. Gadget $\gamma_i$ for $0 \leq i \leq k+1$ is similar
to a clause gadget and is connected to seven consecutive free variables~$v^i_1, \ldots, v^i_7$ on the 
boundary of~$M$.
There is a \emph{local membrane} of seven edges from $v^i_1$ to $v^i_7$.

\begin{figure}[tb]
\centering
\subfigure[]{\includegraphics[scale=0.33]{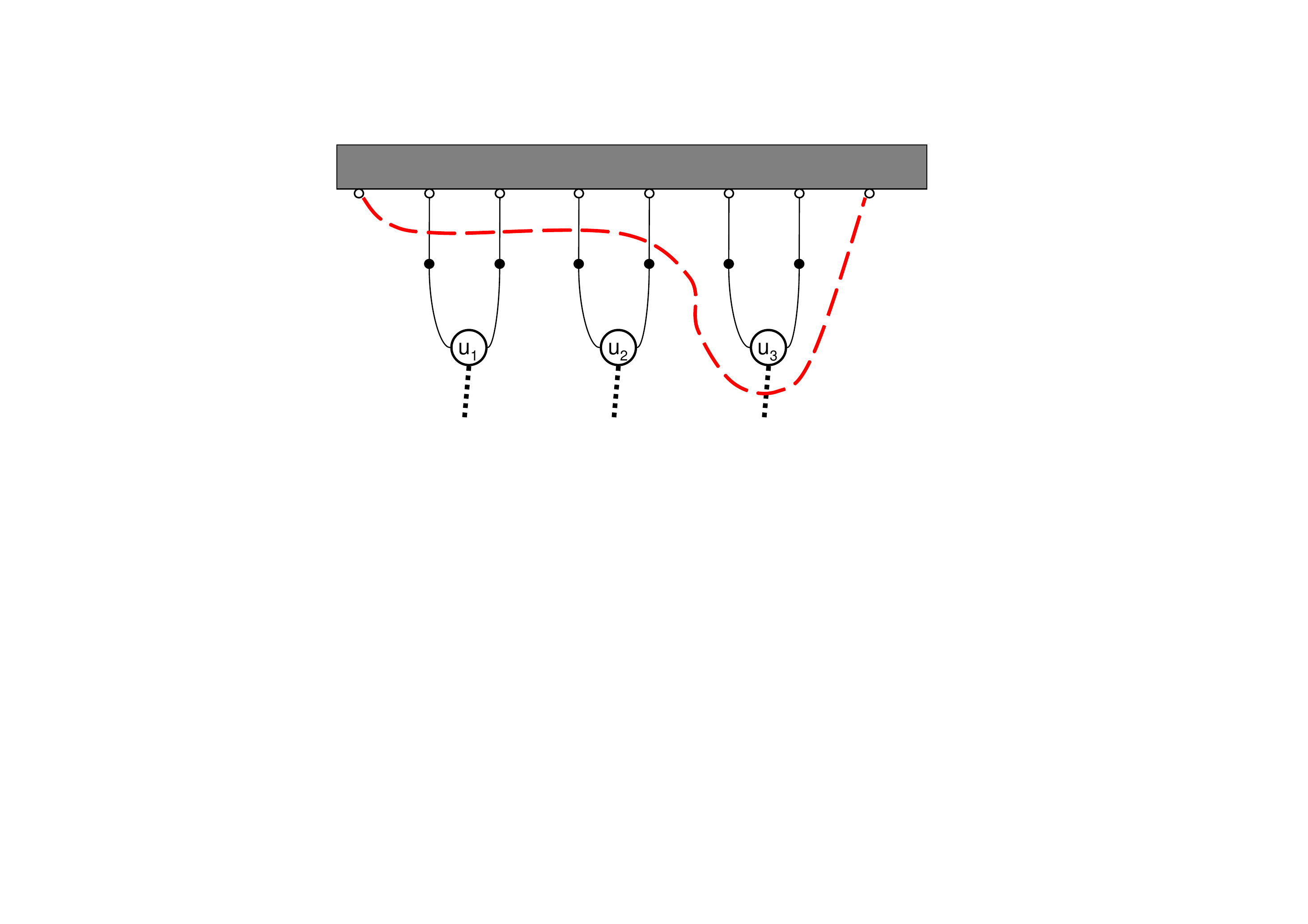}\label{fi:clause-gadget}}\hfil
\subfigure[]{\includegraphics[scale=0.31]{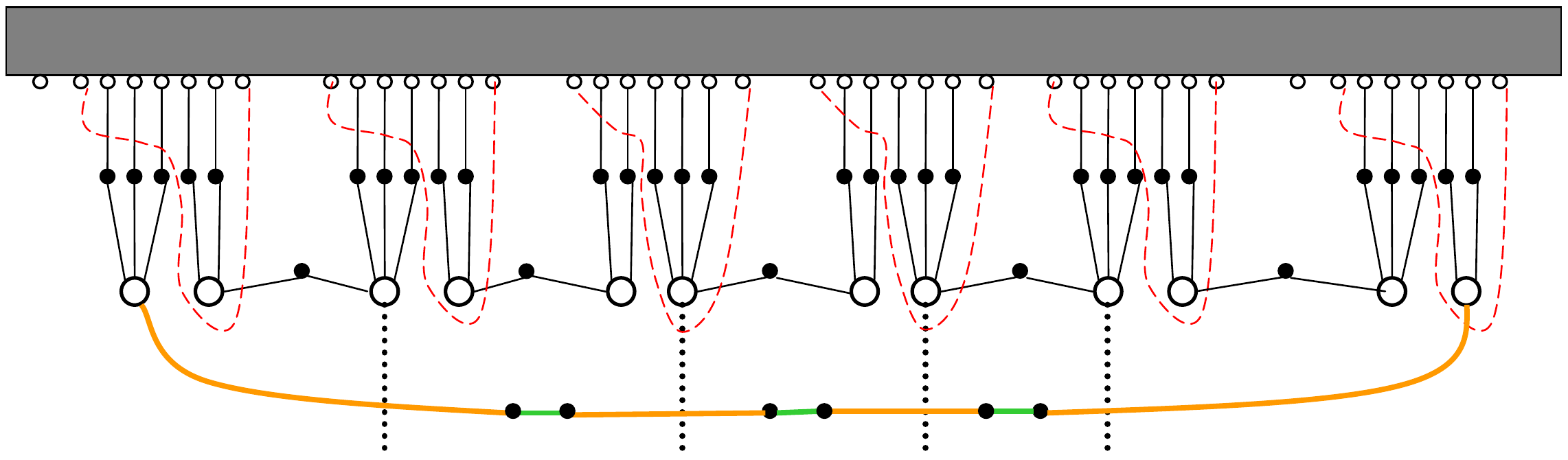}\label{fi:variable-gadget}}
\caption{(a) The clause gadget. (b) The variable gadget.}
\label{fi:gadgets}
\end{figure}

The left terminal gadget~$\gamma(x_0)$ has two primary vertices~$x_0^+$ and~$x_0^-$.
The primary vertex~$x_0^+$ is connected to~$v^0_2$,~$v^0_3$ and~$v^0_4$ by
paths of length two.
The other
primary vertex~$x_0^-$ is connected to $v^0_5, v^0_6$ also by paths of
length two.
Analogously, the right terminal gadget has two primary 
vertices~$x_{k+1}^+$ and~$x_{k+1}^-$, with the same requirements. 
The gadget~$\gamma(x)$ has an \emph{outer membrane} of length~$2k+1$ from~$x_0^+$ 
to~$x_{k+1}^-$.

Each literal gadget~$\gamma(x_i)$ has has two primary vertices~$x_i^+$
and~$x_i^-$. If~$x_i$ is positive, then $x_i^+$ is attached to three
free vertices~$v^i_2$,~$v^i_3$ and~$v^i_4$ of~$M$, and~$x_i^-$
is attached to two free vertices~$v^i_5$ and~$v^i_6$ by two paths of length two,
respectively. The rope to the literal begins at $x_i^+$. Otherwise, if~$x_i$ is
negated, then the gadget is reflected, such that~$x_i^+$ has two, and~$x_i^-$ 
has three paths of length two to the M-graph. The rope to the literal begins
at~$x_i^-$.
The rope is a path of~$2l+3$ edges from vertex~$x_i^{\pm}$ 
of the V-gadget to the vertex~$u_j$ of the clause gadget representing
the literal of $x$.
In addition, there is a path of length two that connects $x_i^-$ to
$x_{i+1}^-$ for all $0 \leq i \leq k$.

The M-graph must have sufficiently many free vertices for 
the edges from the gadgets and barriers. The smallest bound can easily be
computed from the embedding of~$G$ and the attached gadgets; see 
Figure~\ref{fi:m-supergraph}.
The rotation system of the gadgets is retrieved from the drawing and
the ordering of the vertices on the border of M-graphs. 

\paragraph{Correctness}

We will now prove several lemmas on the structure of our construction. With
these lemmas, we will show that an IC-planar drawing of the resulting
graph~$G_\alpha$ immediately yields a valid solution to the underlying 
planar-3SAT problem. First, we show that the M-graph is not crossed.

\begin{lemma}
	The boundary edges of an M-graph (with a fixed rotation system) are
	never crossed in an IC-planar drawing of $G_{\alpha}$.
\end{lemma}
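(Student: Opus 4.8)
The plan is to show that an M-graph $M$ forms an impenetrable barrier: no edge of the construction can cross a boundary edge of $M$ in any IC-planar drawing of $G_\alpha$. The argument has two parts. First I would establish that the internal crossing structure of $M$ is rigidly determined by the fixed rotation system, so none of the boundary edges of $M$ can participate in a crossing with another edge of $M$ itself. As observed in the construction, each $K_4$ subgraph of a mesh cell is embedded as a kite (its rotation system forces the diagonal edges to cross), and because the cells follow a checkerboard pattern the crossing pairs form a matching. Crucially, the boundary edges of $M$ lie on the outer $4$-cycles of the boundary cells and are precisely the \emph{planar} (non-diagonal) edges of those kites; hence by the uniqueness of the M-graph embedding (which the excerpt has already established via~\cite{Kyncl20091676}), these boundary edges are crossing-free \emph{within} $M$.

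The second and main part is to rule out a crossing between a boundary edge of $M$ and an edge coming from \emph{outside} $M$ (a gadget edge, a barrier edge, or a rope edge). Here I would use the IC-planarity constraint together with the density of crossings already present inside $M$. The key observation is that every boundary edge $e$ of $M$ is incident to a vertex $w$ that is an endpoint of a crossing pair inside $M$: in the checkerboard pattern, the vertices along the boundary are shared between adjacent cells, and each such vertex already has one incident edge that is crossed (its kite diagonal). If an external edge were to cross $e$, then $e$ would be a crossed edge sharing the endpoint $w$ with the diagonal crossing at $w$ — but that is exactly the forbidden configuration of two crossed edges sharing a vertex, violating independence of crossings. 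For the \emph{free} vertices at the bottom line (which by construction are not incident to any crossing edge in $M$), I would argue separately: these are the attachment points for gadgets and barriers, so any edge incident to them leaves $M$ into the face where gadgets reside, and I would show that reaching a boundary edge of $M$ from the free-vertex side would require a path to cross into the interior of $M$, which is blocked because the interior is saturated with independent crossings.

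The cleanest way to organize the second part is a counting/saturation argument: a mesh of the chosen size has every internal cell contributing a crossing, so each internal edge that could be crossed is already ``used'' by its kite partner. Since in an IC-planar drawing each edge is crossed at most once and no two crossed edges are adjacent, there is simply no crossing ``budget'' left at any vertex of $M$ to accommodate an additional crossing on an incident boundary edge. I would make this precise by noting that every boundary edge has both endpoints adjacent to interior crossings, so crossing it would create either a doubly-crossed edge or an adjacent crossed pair in both directions.

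The hard part will be handling the free vertices at the bottom line rigorously, since these are by design \emph{not} incident to interior crossings and are the intended interface with the gadgets. The statement ``$M$ cannot be crossed by any path from a free vertex'' was asserted in the construction, and the real content of this lemma is to derive it from IC-planarity rather than to assume it. I expect the proof to invoke that a path starting at a free vertex and reaching a non-incident boundary edge of $M$ would have to enter the interior face structure of $M$, and that each entry forces a crossing with an already-crossed kite diagonal, again violating independence. Formalizing that ``entering the interior forces an adjacent crossing'' — essentially a Jordan-curve argument combined with the local saturation at every interior vertex — is where the bulk of the careful casework lies, and it is the step I would write out most explicitly.
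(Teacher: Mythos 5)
Your proposal is correct and takes essentially the same route as the paper: the paper's entire proof is the two-step observation that the fixed rotation system forces every $K_4$ of the M-graph to be embedded as a kite, so the kite diagonals exhaust the crossing ``budget'' at their endpoints, and any further crossing on an edge of $M$ would violate IC-planarity. If anything, you are more careful than the paper, whose two-sentence proof leaves the bottom-line edges between free vertices to the bare assertion in the construction paragraph; your planned Jordan-curve/saturation casework for those edges is precisely what the paper sweeps under ``follows directly from the construction.''
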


\begin{proof}  
	This lemma follows directly from the construction. Each~$K_4$ must be embedded
  as a kite, and further edge crossings violate IC-planarity. 
\end{proof}

Consequently, the following corollary holds.

\begin{corollary}
  A path from a free boundary vertex cannot cross any M-graph.
\end{corollary}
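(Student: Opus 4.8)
The plan is to derive this corollary from the preceding lemma by reducing ``crossing the M-graph'' to the impossibility of crossing any single edge of it. First I would note that, since the M-graph has a unique embedding under its fixed rotation system, a path that begins at a free boundary vertex starts in the outer face of the M-graph. For the path to cross the M-graph, it must pass from the outer region into some region separated from it by edges of the M-graph; by a Jordan-curve argument applied to the cycles of the (uniquely embedded) M-graph, such a transition forces at least one transversal crossing with an edge of the M-graph.

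Next I would show that no edge of the M-graph admits a crossing. I would split the edges into two types. The boundary edges are handled immediately by the previous lemma, which asserts that they are never crossed in an IC-planar drawing of $G_\alpha$. For the interior edges, I would use that in the checkerboard embedding each interior edge is a diagonal of some cell and is therefore \emph{already} crossed by the companion diagonal of that cell; since the drawing is $1$-planar, every edge is crossed at most once, so an interior edge cannot be crossed a second time by the path. Consequently, neither boundary nor interior edges are available for the path to cross.

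Combining these observations, the path has no edge of the M-graph left to cross, so it cannot leave the outer region and hence cannot cross the M-graph. The step I expect to need the most care is the topological one: formalizing that traversing the M-graph necessarily produces a crossing with one of its edges, rather than permitting the path to slip through the shared vertex or glide along a face boundary. I would address this by exploiting that, once one diagonal per cell is deleted, the M-graph is a $3$-connected plane graph whose drawn edges cut the plane into well-defined faces; any curve entering a bounded face from the unbounded one must cross a bounding edge, and IC-planarity prevents the shared free vertex from serving as a gateway, since its incident boundary edges are crossing-free.
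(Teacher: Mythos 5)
Your overall architecture (reduce ``crossing the M-graph'' to crossing one of its edges, then rule out every edge type) is reasonable, but your case analysis has a genuine hole: the claim that every interior edge of the M-graph ``is a diagonal of some cell and is therefore already crossed by the companion diagonal'' is false. The M-graph is a mesh whose cells are only \emph{alternately} filled with crossing diagonals---the checkerboard pattern exists precisely so that no vertex is incident to two crossing edges---so its interior contains many mesh edges (the sides of the cells, filled or empty) that are not diagonals and are not crossed by anything in $M$. Your 1-planarity argument (``crossed at most once, so cannot be crossed a second time'') says nothing about these edges, so under your decomposition the path still has an abundance of uncrossed interior edges available, and the proof does not close. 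The correct reason those mesh edges are also uncrossable is the IC-condition rather than 1-planarity: by construction every vertex of $M$ except the free vertices on the bottom line is incident with a crossing diagonal, so a crossing on a mesh edge would create two crossed edges sharing an endpoint, violating the independent-crossings requirement. This is exactly the ``further edge crossings violate IC-planarity'' step in the paper's proof of the preceding lemma.

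That said, you are doing more work than the corollary requires, which is why the paper states it with a bare ``Consequently.'' A path from a free boundary vertex starts \emph{on} the outer boundary cycle of $M$, and the fixed rotation system places the gadget edges at a free vertex on the outer side of the bottom line, so the path begins in the outer face; by your own Jordan-curve argument, the first edge of $M$ such a path could cross must then be a \emph{boundary} edge---and the lemma already says boundary edges are never crossed in an IC-planar drawing of $G_{\alpha}$. With that observation the interior edges never come into play and the corollary is immediate from the lemma alone. If you do want the stronger statement that no edge of $M$ whatsoever can be crossed, you need the matching-violation argument above, not the ``already crossed once'' argument.
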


Now, we show that every clause, terminal and literal gadget
has at least one primary vertex that is not \OUT.

\begin{lemma} \label{lem:inoutC-gadget}
	In every IC-planar drawing of~$G_{\alpha}$, at most two of the
	primary vertices~$u_1$,~$u_2$ and~$u_3$ of a clause gadget can be \OUT,
  and at most one of the primary vertices~$x_i^+, x_i^-$ of a terminal or a 
  literal gadget can be \OUT of the local membrane.
\end{lemma}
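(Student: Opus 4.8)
The plan is to analyze the membrane structure of each gadget separately and to apply a counting argument based on how many edges must cross each membrane. Recall that a vertex is declared \OUT when it is separated from the M-graph by the membrane, i.e., when the paths connecting it to the free boundary vertices are forced to cross the membrane. Since the graph is IC-planar, each membrane edge can be crossed at most once, and crossings must form a matching (no two crossed edges share an endpoint). The key observation is that moving a primary vertex \OUT costs a certain number of crossings on the membrane, and the total crossing budget of the membrane is bounded by its length; pushing too many primary vertices out would exceed this budget or would force two crossing edges to share an endpoint, violating independence.

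First I would treat the clause gadget. The membrane here consists of nine edges from $v_1$ to $v_8$, and each literal's primary vertex $u_i$ is joined to the M-graph through the two vertices $a_i, b_i$ and the edges $(a_i,v_{2i})$, $(b_i,v_{2i+1})$. I would argue that for $u_i$ to be \OUT, the two edges connecting its sub-gadget to the boundary must both cross the membrane, and I would count the distinct membrane edges available. Because the membrane has only nine edges and each can absorb at most one independent crossing, and because the three literal sub-gadgets are attached to disjoint consecutive boundary segments, a topological/counting argument shows that at most two of $u_1,u_2,u_3$ can simultaneously be pushed to the \OUT side while keeping all crossings independent; forcing all three out would require a crossing configuration that either reuses a membrane edge or creates adjacent crossings. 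For the literal and terminal gadgets, the local membrane has seven edges and there are only two primary vertices $x_i^+, x_i^-$ attached (one by three paths, one by two paths of length two); the same independent-crossing budget argument shows that at most one of them can be \OUT, since sending both out would overload the membrane or violate the matching condition on crossings.

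The main obstacle I anticipate is making the ``crossing budget'' argument fully rigorous in the presence of paths of length two rather than single edges, since a path gives more freedom in where a crossing can occur and one must carefully verify that each such path, to place its endpoint \OUT, nonetheless consumes a dedicated independent crossing on a distinct membrane edge. I would handle this by using a Jordan-curve argument: the membrane together with the relevant portion of the M-graph boundary bounds a region, and any path connecting an \OUT primary vertex back to the M-graph must cross the closed curve formed by the membrane an odd number of times, hence at least once; combined with the IC-planar restriction that each membrane edge carries at most one crossing and crossings are pairwise non-adjacent, this pins down the exact capacity. Assembling these per-gadget bounds then yields the lemma statement directly, since each bound is proved independently for a single gadget and its own (local or full) membrane.
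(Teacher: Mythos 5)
Your proposal is correct and takes essentially the same approach as the paper's own proof: the paper likewise counts the crossings that \OUT primary vertices force on the (local) membrane and observes that exceeding the membrane's capacity would require either a doubly crossed membrane edge or two adjacent crossed membrane edges, contradicting IC-planarity. Your Jordan-curve justification for why each attaching path must cross the membrane at least once is precisely the detail the paper's terse argument leaves implicit, so your version is, if anything, slightly more careful than the published one.
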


\begin{proof}
	Assume that~$u_1$,~$u_2$ and~$u_3$ all are \OUT. Then, the membrane 
  must cross five edges. Since the membrane has length seven, either
  one membrane edge is crossed twice or two adjacent membrane edges
  are crossed, which contradicts the IC-planarity of the drawing.
  The proof for terminal and literal gadgets works
  analogously.
\end{proof}

Next, we show that the outer membrane crosses each rope.

\begin{lemma} \label{lem:ropeXmemberane}
	In every IC-planar drawing of $G_{\alpha}$, each rope
        connected to a V-gadget is crossed by
	the outer membrane of the V-gadget.
\end{lemma}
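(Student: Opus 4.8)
The plan is to argue topologically that the outer membrane, together with a portion of the boundary of the M-graph to which the V-gadget is attached, separates the starting point of each rope from its endpoint, and that a rope cannot escape this region except by crossing the membrane. First I would make the separating curve explicit. Let $P$ be the outer membrane, the path of length $2k+1$ from $x_0^+$ to $x_{k+1}^-$. Since both terminal vertices $x_0^+$ and $x_{k+1}^-$ are joined to free vertices of the M-graph $M$ by paths of length two, the path $P$ together with the arc of the boundary of $M$ (and those connecting paths) closes up into a simple closed curve $\Gamma$. By the definition of \IN and \OUT, $\Gamma$ bounds exactly the region $R$ whose interior vertices are \IN. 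By construction every literal gadget $\gamma(x_i)$ with $1\le i\le k$, and in particular each primary vertex $x_i^{\pm}$ at which a rope begins, lies strictly between the two terminal endpoints of $P$ and hence in the interior of $R$.

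Next I would locate the far end of each rope. A rope realizes an edge $(x,c)$ of the planar-3SAT graph $G$ and runs to a vertex $u_j$ of the C-gadget of the clause $c$, which is attached to a \emph{different} M-graph, namely the one filling the face of $G$ that hosts that clause gadget. Because the whole framework is a simultaneous embedding of $G$ and $G^*$, the rope must cross the barrier realizing the dual edge of $(x,c)$; consequently $u_j$ lies outside $R$, i.e.\ $u_j$ is \OUT. Thus every rope is a simple curve joining an \IN vertex $x_i^{\pm}$ to an \OUT vertex $u_j$.

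The conclusion then follows from a Jordan-curve parity argument. A curve from an interior point of $\Gamma$ to an exterior point must cross $\Gamma$ an odd number of times, hence at least once. The curve $\Gamma$ splits into two parts: the outer membrane $P$ and the arc of $\partial M$. By the preceding lemma the boundary edges of $M$ are never crossed in an IC-planar drawing of $G_\alpha$ (equivalently, by the corollary no path from free boundary vertices can enter $M$), so the rope crosses the $M$-part of $\Gamma$ zero times. Therefore all of its crossings with $\Gamma$—an odd, hence positive, number—occur on $P$, which is exactly the assertion that the outer membrane crosses the rope.

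I expect the main obstacle to be the rigorous verification of the second step: that $u_j$ is genuinely \OUT. This is where the global planar structure is essential, since one must use that the M-graphs sit inside distinct faces of $G$, that the barrier dual to $(x,c)$ cannot be avoided, and that no rope can reach its clause gadget while remaining inside $R$. A secondary point to check is that the rope starts \emph{strictly} inside $R$ rather than on $\Gamma$, which relies on the rope beginning at a literal primary vertex $x_i^{\pm}$ with $1\le i\le k$, strictly enclosed between the terminal endpoints $x_0^+$ and $x_{k+1}^-$ of $P$. The parity step itself is then routine, as IC-planarity keeps the rope simple and the two forbidden-crossing facts above make the $M$-part of $\Gamma$ impenetrable.
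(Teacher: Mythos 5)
Your parity argument has the right global shape, but it rests on a premise that is essentially the lemma itself, and that premise is exactly what your proposal never proves. You assert ``by construction'' that each rope's initial vertex $x_i^{\pm}$ lies strictly inside the region $R$ bounded by the outer membrane together with the boundary of $M$. Nothing fixes this: only the rotation system is given, the embedding is not, and the membrane edges are ordinary edges that may themselves be crossed in an IC-planar drawing. The fact that $x_i^{\pm}$ is attached to free vertices lying between the attachment points of $x_0^+$ and $x_{k+1}^-$ along the boundary of $M$ says nothing about which side of the \emph{drawn} membrane curve $x_i^{\pm}$ ends up on. A priori the drawing could route the outer membrane \emph{below} the literal gadget, between $x_i^{\pm}$ and the M-graph, by crossing the length-two attachment paths, or slip around the far end of the rope by crossing a barrier; in either case the rope start is outside your region and the Jordan-curve count yields nothing. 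Ruling out these two routings is the entire content of the paper's proof of Lemma~\ref{lem:ropeXmemberane}, and it is a counting argument your proposal never invokes: under IC-planarity the crossed edges form a matching, so the membrane, a path of $2k+1$ edges, can carry at most $k+1$ crossings; the barrier size is chosen with $l$ at least two more than the maximum number of clause occurrences of a variable, so $l \ge k+2$ and the membrane cannot pass a barrier; and diving under the rope's start vertex would force at least $k+2$ crossings on the membrane. Since your argument uses neither the membrane length $2k+1$, nor the barrier size $l$, nor the matching constraint, the missing piece is not merely the step you flag (that $u_j$ is \OUT) but the enclosure claim itself --- which is in fact only true \emph{a posteriori}, as a consequence of this lemma together with the tight crossing budget exploited in the lemma that follows it, so assuming it is circular.

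Two smaller problems would remain even if the enclosure claim were granted. First, your closed curve $\Gamma$ includes the length-two paths joining $x_0^+$ and $x_{k+1}^-$ to $M$; unlike the boundary edges of the M-graph (which are indeed uncrossable by the paper's earlier lemma), these attachment paths can be crossed, so your parity argument would only conclude that the rope crosses the membrane \emph{or} a terminal attachment path, which is weaker than the statement. Second, \IN and \OUT in the paper refer to the \emph{local} membranes of the individual gadgets (cf.\ Lemma~\ref{lem:inoutC-gadget}), not to the region of the outer membrane; the subsequent consistency lemma shows that in valid drawings all $x_i^+$ may be \OUT, and this sidedness of the primary vertices is precisely the carrier of the truth assignment, so it cannot be pinned down in advance the way your region $R$ requires.
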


\begin{proof}
  Suppose some rope is not crossed by the outer membrane.
  Either the outer membrane crosses at least one barrier or it crosses
  the three paths of length two that connect the V-gadget endpoint of
  the rope to its M-graph.  It cannot do the first if the size of the
  barrier is chosen to be
  \begin{displaymath}
    l \geq  \max \{k\mid\text{a variable }x\text{ occurs in }k\text{
      clauses of }\alpha\} +2 .
  \end{displaymath}
  It cannot do the second since the outer membrane, of length $2k+1$,
  would be crossed at least $k+2$ times.
\end{proof} 

The fact that a rope propagates a truth value is due to the fact
that its length is tight, as the following lemma shows.

\begin{lemma}
  In every IC-planar drawing of $G_{\alpha}$, respecting the rotation
  system, the endpoint of a rope at a C-gadget is \OUT if the endpoint
  at the vertex is \IN (its local membrane).
\end{lemma}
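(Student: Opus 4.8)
The plan is to show that, under the hypothesis, the rope is forced to spend its entire crossing budget on the V-gadget side and on the barrier, leaving no crossing available to pull the C-gadget endpoint inside its membrane. The engine of the argument is the tightness of the rope length~$2l+3$ together with the matching condition of IC-planarity.

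I would first fix the combinatorial bound. Orient the rope as a path~$e_1,\dots,e_{2l+3}$ from~$x_i^{\pm}$ to the C-gadget endpoint~$u_j$. In an IC-planar drawing the crossing edges form a matching, so two consecutive rope edges (which share a vertex) cannot both be crossed; moreover, by 1-planarity each crossed rope edge participates in exactly one crossing. Hence the crossed edges of the rope form an independent set in its path, and at most~$\lceil (2l+3)/2\rceil = l+2$ rope edges can be crossed.

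Next I would count the crossings that are unavoidable. The rope represents a primal edge~$e$ of~$G$, whose dual edge is thickened into the barrier of~$l$ parallel edges; since this barrier together with the (uncrossable) M-graphs separates the two ends of the rope, the rope must cross all~$l$ barrier edges, using~$l$ distinct crossed rope edges. By Lemma~\ref{lem:ropeXmemberane} the outer membrane of the V-gadget crosses the rope, giving one further crossed rope edge, distinct from the barrier crossings because a rope edge crosses at most one other edge. Finally, the hypothesis that~$x_i^{\pm}$ is~\IN its local membrane forces the rope to cross that local membrane: $x_i^{\pm}$ lies in the region enclosed by the local membrane and the boundary of its M-graph, and the rope must leave this region to reach~$u_j$; it cannot do so through the M-graph (no path from a free vertex can cross an M-graph), so it crosses the local membrane. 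This is a fourth kind of crossed rope edge, again distinct since the membranes and the barrier are pairwise edge-disjoint. Altogether we already account for~$l+2$ distinct crossed rope edges.

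The conclusion then follows by contradiction. If~$u_j$ were also~\IN, then~$u_j$ would lie inside the region enclosed by the C-gadget membrane and its M-graph boundary, whereas the rest of the rope (in particular its barrier-crossing portion) lies outside that region; the rope would therefore have to cross the C-gadget membrane as well, giving one more crossed rope edge, for a total of at least~$l+3$. This contradicts the bound~$l+2$ from the second step, so~$u_j$ must be~\OUT. The step I would treat most carefully is the accounting of the four crossing types as pairwise distinct and simultaneously unavoidable: checking that the~$l$ barrier crossings, the outer-membrane crossing, the local-membrane crossing, and the putative C-gadget-membrane crossing cannot be shared on a single rope edge (by 1-planarity) and cannot be avoided (by the separation properties of the barrier and the membranes) is exactly where the length~$2l+3$ is pinned down to be tight.
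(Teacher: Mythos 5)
Your proof is correct and follows essentially the same route as the paper's: it charges the rope with $l$ barrier crossings, one crossing by the outer membrane (Lemma~\ref{lem:ropeXmemberane}), and one crossing by the local membrane forced by the \IN hypothesis, then uses the tight rope length $2l+3$ to rule out any further crossing with the C-gadget membrane. The only difference is that you spell out the counting the paper leaves implicit, namely that IC-planarity forces the crossed rope edges to form a matching along the path, so at most $\lceil (2l+3)/2 \rceil = l+2$ of them can be crossed.
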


\begin{proof}
  M-graphs, by construction, cannot be crossed by a rope. Thus, the rope
  must cross a barrier of~$l$ edges.
  In addition, a rope is crossed by the
  outer membrane of the variable gadget. If the endpoint at the
  vertex is \IN (its local membrane), then the rope is crossed by
  $l+2$ edges. Hence, it cannot cross another membrane, since its length
  is $2l+3$.
\end{proof}

The consistency of the truth assignment of the variable is granted by the 
following lemma.

\begin{lemma}
  In every IC-planar drawing of~$G_{\alpha}$, and every variable~$x$,
  either $x_i^+$ is \OUT (and $x_i^-$ is \IN) for all $0 \leq i \leq
  k+1$
  or $x_i^+$ is \IN (and $x_i^-$ is \OUT) for all $0 \leq i \leq
  k+1$.
\end{lemma}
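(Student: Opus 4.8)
The plan is to establish the consistency of the truth assignment by showing that the structure of the V-gadget forces a global alternation between the \IN/\OUT states of consecutive primary vertices. The key structural features I would exploit are: the outer membrane of length $2k+1$ that runs from $x_0^+$ to $x_{k+1}^-$, the paths of length two connecting each $x_i^-$ to $x_{i+1}^-$, and Lemma~\ref{lem:inoutC-gadget}, which guarantees that in each literal or terminal gadget at most one of the two primary vertices $x_i^+, x_i^-$ can be \OUT of its local membrane.

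First I would argue that the outer membrane, which by Lemma~\ref{lem:ropeXmemberane} must cross every rope, separates the plane into an inside and an outside region, and that each primary vertex is classified as \IN or \OUT according to this membrane. The crucial counting argument comes from the tightness of the outer membrane's length: it has $2k+1$ edges, and each rope it crosses consumes one crossing. I would show that the membrane's budget of crossings is exactly used up by forcing it to cross each of the $k$ ropes once plus the endpoints, leaving no slack to deviate from a consistent side. Concretely, if some $x_i^+$ is \OUT while $x_j^+$ is \IN for $i \neq j$, then the outer membrane together with the length-two connector paths between consecutive $x_i^-$ vertices would be forced to cross an additional edge, exceeding the IC-planar budget (either crossing a membrane edge twice or crossing two adjacent edges, contradicting independence of crossings).

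Next I would formalize the inductive step: the length-two path connecting $x_i^-$ to $x_{i+1}^-$ acts as a synchronization link. I would show that if $x_i^+$ is \OUT (so $x_i^-$ is \IN by Lemma~\ref{lem:inoutC-gadget}), then the connector path and the local membrane geometry force $x_{i+1}^+$ to also be \OUT; symmetrically for the \IN case. The base of the induction is anchored at the terminal gadgets $\gamma(x_0)$ and $\gamma(x_{k+1})$, whose primary vertices have the same \IN/\OUT constraint and whose states are tied to the two endpoints of the outer membrane. Propagating this constraint along the chain $x_0, x_1, \ldots, x_{k+1}$ yields the claimed global alternative: either all $x_i^+$ are \OUT or all $x_i^+$ are \IN.

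The main obstacle I anticipate is making the topological separation argument rigorous, specifically verifying that the \IN/\OUT labels are well-defined and globally consistent given that the outer membrane is a single path rather than a closed curve. The subtlety is that a path does not by itself enclose a region; the enclosure arises only in combination with the M-graph boundary (as in the definition of \IN/\OUT earlier in the construction). I would need to carefully use the fact that the M-graph cannot be crossed (the preceding corollary) so that the outer membrane together with a portion of the M-graph boundary forms a closed Jordan curve, relative to which \IN and \OUT are unambiguous. Once this Jordan-curve framing is in place, the crossing-parity and length-counting arguments reduce to the same style of reasoning already used in Lemmas~\ref{lem:inoutC-gadget} and~\ref{lem:ropeXmemberane}, and the induction closes cleanly.
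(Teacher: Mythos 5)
Your proposal assembles the right ingredients, but the engine you choose to drive it --- the ``no slack'' crossing budget of the outer membrane --- is both different from what the construction actually uses and incorrect as stated. The outer membrane has $2k+1$ edges, so under IC-planarity (each edge crossed at most once, no two adjacent edges crossed) it can absorb up to $k+1$ crossings; Lemma~\ref{lem:ropeXmemberane} accounts for only $k$ of them, so one unit of slack remains and the budget is not ``exactly used up.'' More importantly, an inconsistent assignment (say $x_i^+$ \OUT but $x_j^+$ \IN for $i \neq j$) does not force any additional crossing of the outer membrane: it is crossed once per rope regardless of which side of each \emph{local} membrane the primary vertices lie on, so no global count on the outer membrane can detect inconsistency. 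In the paper's proof, consistency is enforced locally: each length-two connector path between consecutive gadgets consists of two \emph{adjacent} edges, hence can be crossed at most once in the entire IC-planar drawing. If $x_0^+$ is \OUT, then $x_0^-$ is \IN by Lemma~\ref{lem:inoutC-gadget}, the local membrane of $\gamma(x_0)$ must cross the connector to the next gadget, and the local membrane of $\gamma(x_1)$ therefore cannot cross it --- forcing that membrane to cross the attachment paths of $x_1^+$ instead, i.e., $x_1^+$ is \OUT, and so on by induction. Your second paragraph gestures at exactly this step (``the connector path and the local membrane geometry force $x_{i+1}^+$ to also be \OUT'') but never identifies the at-most-one-crossing property of the two-edge connector as the mechanism, which is the entire reason the connectors have length two.

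The second gap is the \IN case. Lemma~\ref{lem:inoutC-gadget} is a one-sided bound: if $x_0^+$ is \IN, it says nothing about $x_0^-$ (a priori both could be \IN), so your symmetric left-to-right induction has neither a base nor a step in this case --- yet the statement asserts that $x_i^-$ is then \OUT, and nothing local produces that. This is where the outer membrane genuinely enters the paper's proof, and only here: it is a path with endpoints $x_0^+$ and $x_{k+1}^-$, so (together with the uncrossable M-graph boundary --- your Jordan-curve framing is a sound concern and matches the paper's definition of \IN/\OUT) $x_0^+$ being \IN forces $x_{k+1}^-$ to be \OUT, after which the same connector-path propagation runs right to left. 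So keep your Jordan-curve setup and your induction skeleton, but replace the global budget count by (i) the single-crossing property of the adjacent-edge connectors for the inductive step, and (ii) the endpoint argument on the outer membrane for the case split; as written, your central counting argument does not close the proof.
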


\begin{proof}
  If~$x_0^+$ is \OUT, then by 
  Lemma~\ref{lem:inoutC-gadget} $x_0^-$ is \IN and the local membrane 
  must cross an edge of the path of length two from~$x_0^+$ to~$x_1^-$.
  This implies that the local membrane of the first literal gadget
  cannot cross this path, and therefore must cross the paths from~$x_1^+$ to the
  M-graph. It follows by induction that all~$x_i^+$
  are \OUT and all~$x_i^-$ are \IN; see
  Figure~\ref{fi:variable-gadget}.
  If $x_0^+$ is \IN, then the outer membrane insures that $x_{k+1}^-$
  is \OUT.
  We then proceed from right to left. Now, all~$x_i^-$ are \OUT and
  all~$x_i^+$ are \IN.
\end{proof}

With these lemmas, we can finally prove the following theorem.
\newcommand{\thNpRot}{IC-planarity testing with given rotation system is \NP-complete.}
\begin{theorem}\label{th:np-rot}
  \thNpRot
\end{theorem}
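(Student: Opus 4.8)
The plan is to complete the reduction from planar-3SAT that the preceding construction and lemmas have set up. First I would dispatch membership in \NP: given a candidate IC-planar drawing one can verify in polynomial time that it respects the prescribed rotation system and that every crossing is independent, exactly as in the proof of Theorem~\ref{th:np-hard}. The substance is the hardness reduction. Given a planar-3SAT instance $\alpha$ with plane graph $G$ and dual $G^*$, I would form the graph $G_\alpha$ together with its fixed rotation system as described above: replace each face of $G$ (vertex of $G^*$) by a sufficiently large M-graph, replace each dual edge by a barrier of $l$ parallel edges with $l \geq \max_x\{k : x \text{ occurs in } k \text{ clauses}\}+2$, attach a V-gadget for every variable and a C-gadget for every clause, and connect each literal occurrence to its clause by a rope of length $2l+3$. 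Since $G$ and its attached gadgets have size polynomial in $\alpha$ and $l$ is polynomially bounded, $G_\alpha$ and its rotation system are computable in polynomial time.

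It remains to prove that $\alpha$ is satisfiable if and only if $G_\alpha$ admits an IC-planar drawing respecting the rotation system. For the forward direction I would start from a satisfying truth assignment and exhibit a drawing: the rigid M-graphs, barriers and ropes are laid out along the simultaneous embedding of $G$ and $G^*$, and for each variable $x$ I push the primary vertices of its V-gadget uniformly \IN or \OUT according to the value of $x$, using the outer membrane of length $2k+1$ and the terminal gadgets to enforce this uniformity. Routing each rope so that a satisfied literal places the corresponding initial vertex $u_i$ of its C-gadget \IN, every clause---having at least one true literal---needs at most two of $u_1,u_2,u_3$ to be \OUT, which by the counting in Lemma~\ref{lem:inoutC-gadget} its membrane of nine edges can accommodate without forcing a double or adjacent crossing. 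One then checks that all remaining crossings occur inside kites of M-graphs or on the independent edges introduced by the gadgets, so the drawing is IC-planar.

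For the reverse direction I would read a truth assignment off any IC-planar drawing of $G_\alpha$ that respects the rotation system, and here the earlier lemmas do essentially all the work. The consistency lemma assigns to each variable $x$ a well-defined value by recording whether the vertices $x_i^+$ are uniformly \IN or uniformly \OUT. Lemma~\ref{lem:ropeXmemberane} together with the tight-length rope lemma guarantees that this value is transmitted faithfully along every rope to the initial vertex $u_i$ of the appropriate C-gadget, so that $u_i$ is \IN precisely when the corresponding literal is satisfied. Finally Lemma~\ref{lem:inoutC-gadget} forbids all three of $u_1,u_2,u_3$ from being \OUT, hence every clause contains at least one satisfied literal and the assignment satisfies $\alpha$.

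I expect the main obstacle to be the forward direction: unlike the reverse direction, which is almost immediate from the lemmas, realizing an arbitrary satisfying assignment as an actual IC-planar drawing requires verifying that the M-graphs provide enough free boundary vertices, that the choice of $l$ prevents any rope or membrane from taking an unintended shortcut across a barrier or around a gadget, and that the local \IN/\OUT choices in the V- and C-gadgets can be made simultaneously consistent with the global simultaneous embedding of $G$ and $G^*$ without introducing a forbidden crossing. This is where the quantitative bounds on $l$ and on the gadget sizes must be pinned down.
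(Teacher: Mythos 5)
Your proposal is correct and takes essentially the same approach as the paper: the paper's own (much terser) proof of Theorem~\ref{th:np-rot} likewise invokes \NP-membership from the proof of Theorem~\ref{th:np-hard} and reduces from planar-3SAT via the already-built M-graph/barrier/rope construction, drawing the V- and C-gadgets according to a satisfying assignment in one direction and reading the assignment off any IC-planar drawing via Lemmas~\ref{lem:inoutC-gadget}, \ref{lem:ropeXmemberane}, and the rope-length and consistency lemmas in the other. Your extra scrutiny of the forward direction (the bound on $l$, gadget sizes, and simultaneous consistency of the \IN/\OUT choices) merely makes explicit what the paper dispatches as ``IC-planar by construction.''
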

\begin{proof}
  We have already stated in the proof of Theorem~\ref{th:np-hard} that
  IC-planarity is in \NP. We reduce from planar-3SAT and show
  that an expression~$\alpha$ is satisfiable if and only if the
  constructed graph~$G^*_{\alpha}$ has a IC-planar drawing.
  If~$\alpha$ is satisfiable, then we draw the V- and C-gadgets according
  to the assignment, such that the initial vertex of each rope from the
  gadget of a variable~$x$ is \IN at the C-gadget if the literal is
  assigned the value true. The resulting drawing is IC-planar by construction. 
  If~$G^*_{\alpha}$ has an IC-planar drawing, then we obtain the truth 
  assignment of~$\alpha$ from the drawing. Thus, IC-planarity with a given
  rotation system is \NP-complete.
\end{proof}

Note that the construction for the proof of Theorem.~\ref{th:np-rot} also holds 
in the variable embedding setting, since the used graphs have an almost fixed 
IC-planar embedding. From this, we can obtain an alternative \NP-completeness
proof of IC-planarity testing.

\begin{figure}[t]	
\centering
\subfigure[]{\includegraphics[scale=0.65,page=1]{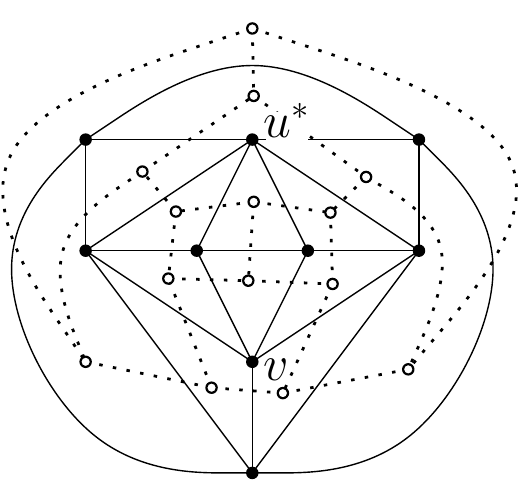}\label{fig:dual-1}}\hfil
\subfigure[]{\includegraphics[scale=0.65,page=2]{dual}\label{fig:dual-2}}\hfil
\subfigure[]{\includegraphics[scale=0.65]{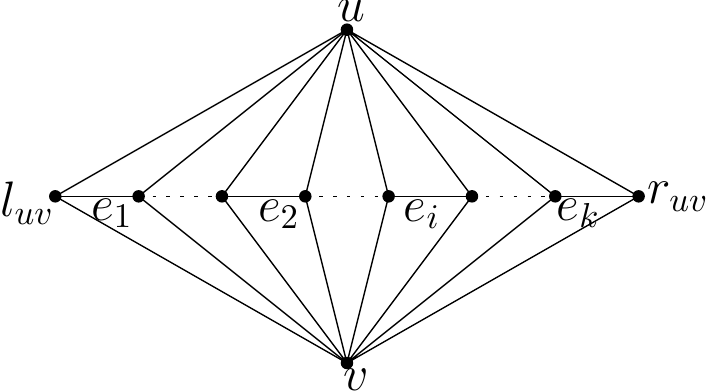}\label{fig:routing-edges}}
  \caption{(a) A triconnected graph~$T$ (solid) and its dual~$T^*$ (dotted), 
    (b) The extended graph~$T^*\cup\{u^*,v^*\}$ and the three length-3 paths
    between~$u^*$ and~$v^*$ (bold). (c) The ordered routing 
    edges~$e_1,\ldots,e_k$ lie inside the quadrangle~$(u,l_{uv},v,r_{uv})$.}
  \label{fig:dual}
\end{figure}

\subsection{Polynomial-time test for a triangulated plane graph plus a matching} 
On the positive side, we now describe an $O(n^3)$-time algorithm to test whether a graph~$G=(V,E_T \cup E_M)$
that consists of a triangulated plane graph~$T=(V,E_T)$ and a 
matching~$M=(V_M,E_M)$ with~$V_M\subseteq V, E_M\cap E_T=\emptyset$ 
admits an IC-planar drawing that preserves the embedding of~$T$. In the positive case the algorithm also computes an IC-planar drawing. 
An outline of the algorithm is as follows. 
\begin{enumerate*}[label=(\arabic{*})]
  \item \label{alg1} Check for every matching edge if there is a 
    way to draw it such that it crosses only one edge of~$T$.
  \item \label{alg2} Split~$T$ into subgraphs that form a 
    hierarchical tree structure. 
  \item \label{alg3} Traverse the 4-block tree bottom-up and solve
    a 2SAT formula for each tree node.
\end{enumerate*}

In order to check whether there is a valid placement for each matching 
edge~$(u,v)\in M$, we have to find two adjacent faces, one of which is incident
to~$u$, while the other one is incident to~$v$. To this end, we consider the 
dual~$T^*$ of~$T$ that contains a vertex for each face in~$T$ that is not
incident to a vertex~$w\in V_M\setminus\{u,v\}$, and an edge for
each edge in~$T$ that separates two faces. Further, we add two additional
vertices~$u^*$ and~$v^*$ to~$T^*$ that are connected to all faces that are
incident to~$u$ and~$v$, respectively. In the resulting graph~$T^*\cup\{u^*,v^*\}$,
we look for all paths of length~3 from~$u^*$ to~$v^*$. These paths are 
equivalent to routing~$(u,v)$ through two faces that are separated by a 
single edge. Note that no path of length~1 or~2 can exist, since $(i)$ by
construction~$u^*$ and~$v^*$ are not connected by an edge and $(ii)$ if there 
was a path of length~2 between~$u^*$ and~$v^*$, then~$u$ and~$v$ would lie on a
common face in the triangulated graph~$T$; thus, the edge~$(u,v)$ would exist
both in~$E_T$ and in~$E_M$, which is not possible since 
$E_T \cap E_M = \emptyset$. See Figure~\ref{fig:dual} for an illustration. 
If there is an edge that has no valid placement, then~$G$ is not
IC-planar and the algorithm stops. Otherwise, we save each
path that we found as a possible routing for the corresponding edge in~$M$.

Now, we make some observations on the structure of the possible routings of
an edge~$(u,v)\in M$ that we can use to get a hierarchical tree structure of
the graph~$T$. Every routing is uniquely represented by an edge
that separates a face incident to~$u$ and a face incident to~$v$ and that might
be crossed by~$(u,v)$. We call these
edges \emph{routing edges}. Let there be~$k$ routing edges for the pair~$(u,v)$.
Each of these edges forms a triangular face with~$u$. From the embedding, we
can enumerate the edges by the counterclockwise order of their corresponding
faces at~$u$. This gives an ordering~$e_1,\ldots,e_k$ of the routing edges.
Let~$e_1=(l_{uv},l'_{uv})$ and~$e_k=(r'_{uv},r_{uv})$ such that the 
edge~$(u,l_{uv})$ comes before the edge~$(u,l'_{uv})$, and the 
edge~$(u,r'_{uv})$ comes before~$(u,r_{uv})$ in the counterclockwise order
at~$u$. Then, all edges~$e_1,\ldots,e_k$ lie within the 
\emph{routing quadrangle}~$(u,l_{uv},v,r_{uv})$; see 
Figure~\ref{fig:routing-edges}. Note that there may be more complicated structures 
between the edges, but they do not interfere with the ordering. 
Denote by~$Q_{uv}=(u,l_{uv},v,r_{uv})$ the routing quadrilateral
of the matching edge~$(u,v)\in M$. We define the 
\emph{interior}~$\mathcal I_{uv}=(\mathcal V_{uv},\mathcal E_{uv})$ 
as the maximal subgraph of~$T$ such that, for every 
vertex~$w\in \mathcal V_{uv}$, each path from~$w$ to a vertex on the outer face
of~$T$ contains~$u$, $l_{uv}$, $v$, or~$r_{uv}$. 
Consequently,~$Q_{uv}\in\mathcal V_{uv}$. We will now show that two interiors
cannot overlap.

\newcommand{\lemInteriors}[1]{
  For each pair of interiors~$\mathcal I_{uv},\mathcal I_{ab}$, exactly one of 
  the following conditions holds: 
  \begin{enumerate*}[label=(\alph*)]
    \item \label{#1-a1} $\mathcal I_{uv}\cap\mathcal I_{ab}=\emptyset$ 
    \item \label{#1-a2} $\mathcal I_{uv}\subset\mathcal I_{ab}$ 
    \item \label{#1-a3} $\mathcal I_{ab}\subset\mathcal I_{uv}$ 
    \item \label{#1-a4} $\mathcal I_{uv}\cap\mathcal I_{ab}=Q_{uv}\cap Q_{ab}$. 
  \end{enumerate*}
}

\begin{lemma}\label{lem:interiors}
	\lemInteriors{main}
\end{lemma}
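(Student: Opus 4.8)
The plan is to translate the combinatorial definition of the interiors into a statement about two Jordan curves in the plane embedding of $T$, and then reduce everything to showing that the two routing quadrilaterals never cross transversally. First I would record that $Q_{uv}=(u,l_{uv},v,r_{uv})$ really is a $4$-cycle of $T$: the sides $(u,l_{uv})$ and $(u,r_{uv})$ are edges by the choice of $l_{uv},r_{uv}$, and the sides $(v,l_{uv})$, $(v,r_{uv})$ are edges because the extreme routing edges $e_1=(l_{uv},l'_{uv})$ and $e_k=(r'_{uv},r_{uv})$ each separate a $u$-face from a $v$-face, so $v$ is adjacent to $l_{uv}$ and $r_{uv}$. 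Hence $Q_{uv}$ bounds a closed disk $D_{uv}$, the bounded one being the side not containing the outer face. By the definition of the interior, a vertex $w$ lies in $\mathcal I_{uv}$ iff every path from $w$ to the outer face meets a corner of $Q_{uv}$; since the only vertices on the cycle are its four corners, this holds exactly when $w$ lies in $D_{uv}$. Thus $\mathcal I_{uv}$ is precisely the vertex set of $T$ inside the closed disk $D_{uv}$, and likewise for $\mathcal I_{ab}$ and $D_{ab}$.

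With this reframing the lemma becomes a statement about two disks bounded by cycles of a plane graph, and I would first dispose of the easy direction: if the boundary curves $Q_{uv}$ and $Q_{ab}$ do not cross transversally, the four listed cases are exactly the possible configurations. Concretely, a vertex lying in $D_{uv}\cap D_{ab}$ but in the open interior of neither must lie on both boundaries, i.e.\ in $Q_{uv}\cap Q_{ab}$; so if the two open interiors are disjoint then $\mathcal I_{uv}\cap\mathcal I_{ab}=Q_{uv}\cap Q_{ab}$, which is case~(d) (and case~(a) precisely when this common set is empty). If instead the open interiors meet, then without a crossing one boundary lies entirely in the closed disk of the other, so one disk contains the other and we are in case~(b) or~(c). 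These cases are mutually exclusive, so it only remains to exclude transversal crossings.

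The hard part is exactly this last step, and it is genuinely where the routing structure must be used: two arbitrary $4$-cycles in a plane graph \emph{can} cross transversally (sharing two opposite corners with alternating rotations), so the no-overlap conclusion is false for generic separating $4$-cycles and relies on $Q_{uv},Q_{ab}$ being routing quadrilaterals. Since $T$ is plane, its edges do not cross, so $Q_{uv}$ and $Q_{ab}$ meet only at shared corners and shared edges, and a transversal crossing can occur only at a shared corner $p$ at which the two $Q_{uv}$-edges and the two $Q_{ab}$-edges alternate in the rotation around $p$ (and, the curves being closed, such crossings come in pairs). I would show alternation is impossible by tracing the triangulated fan. An alternation at $p$ forces one corner of, say, $Q_{ab}$ to lie in the open interior of $Q_{uv}$ while the neighbouring corner of $Q_{ab}$ lies outside; since consecutive corners of $Q_{ab}$ are adjacent in $T$ and, by triangulation, the region between them is a fan of triangles sharing the opposite matched vertex, following this fan across $\partial Q_{uv}$ exhibits a common neighbour of $u$ and $v$ lying beyond $l_{uv}$ or $r_{uv}$, i.e.\ a routing edge of $(u,v)$ outside $Q_{uv}$. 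This contradicts the defining property that all routing edges of $(u,v)$ lie within $Q_{uv}$ and that $l_{uv},r_{uv}$ are the extreme routing endpoints.

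To make the fan-tracing argument clean I would organise it by which corners of the two quadrilaterals coincide, using two structural facts to keep the number of cases small: that $\{u,v\}\cap\{a,b\}=\emptyset$ because $M$ is a matching (so the matched corners of one quadrilateral can only be identified with the extreme corners $l,r$ of the other, or with nothing), and that in each quadrilateral the two matched vertices are non-adjacent and sit at opposite corners. The bookkeeping over these identifications is the main labour; the single recurring idea in every case is the extremality contradiction above, applied to whichever pair's fan is forced to straddle the other quadrilateral. Once transversal crossings are excluded, the classification of the previous paragraph yields exactly one of (a)--(d), completing the proof.
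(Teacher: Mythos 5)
Your global strategy is essentially the paper's: the paper also reduces to showing the two quadrilaterals cannot properly cross (it assumes a proper intersection, deduces that two opposite corners $c,d$ of $Q_{uv}$ satisfy $c\in\mathcal I_{ab}\setminus Q_{ab}$ and $d\notin\mathcal I_{ab}$ while the other two corners lie on $Q_{ab}$), and it likewise organises the case analysis by which corners can be identified, using $\{a,b\}\cap\{u,v\}=\emptyset$ from the matching. Your preliminary observations ($Q_{uv}$ is a genuine $4$-cycle of $T$ because the two faces flanking each extreme routing edge are triangles with apexes $u$ and $v$; crossings of the two cycles can only happen with alternation at shared vertices since $T$ is plane) are correct and match what the paper uses implicitly.

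The gap is in your single recurring contradiction. You want each crossing configuration to exhibit ``a common neighbour of $u$ and $v$ beyond $l_{uv}$ or $r_{uv}$, i.e.\ a routing edge of $(u,v)$ outside $Q_{uv}$,'' contradicting extremality. This fails twice over. First, a common neighbour of $u$ and $v$ is not a routing edge: a routing edge must have a $u$-face on one side and a $v$-face on the other, i.e.\ two triangles with apexes $u$ and $v$ sharing that edge, which mere adjacency to both $u$ and $v$ does not provide. Second, and more fundamentally, routing edges were defined through dual paths in $T^*$, from which every face incident to a matched vertex in $V_M\setminus\{u,v\}$ has been \emph{deleted}; an edge separating a $u$-face from a $v$-face whose faces touch a foreign matched vertex is simply not a routing edge, and such candidates may legitimately lie outside $Q_{uv}$, so exhibiting one contradicts nothing. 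Indeed, in the actual crossing configurations the situation is the opposite of what your argument needs: the paper's proof shows that a proper crossing forces \emph{every} routing candidate of one of the two matching edges to be incident to a matched endpoint of the other pair (e.g.\ the only candidates for $(a,b)$ incident to $u$ and $v$, or both candidates for $(u,v)$ incident to $b$, as in Figure~\ref{fi:interiors-4}), hence invalid by the dual construction --- contradicting that step~(1) of the algorithm already certified a valid routing for every matching edge. So the correct mechanism is invalidity through foreign matched vertices, not extremality of $l_{uv},r_{uv}$. Relatedly, your structural claim that the region between consecutive corners of a quadrilateral is ``a fan of triangles sharing the opposite matched vertex'' is false in general: the paper explicitly notes that arbitrarily complicated substructures may occur between consecutive routing edges inside $Q_{uv}$; only the two faces immediately flanking each routing edge are guaranteed to be triangles with apexes $u$ and $v$. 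Your fan-tracing step therefore does not go through as stated, and repairing it leads you back to the validity argument the paper actually makes.
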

\begin{proof}
  Assume that neither of the conditions holds. Recall that~$Q_{uv}$ and~$Q_{ab}$
  are the boundaries of the interiors. Note that $\mathcal I_{uv}\cap\mathcal I_{ab}=\emptyset$
  corresponds to disjointness, 
  $\mathcal I_{uv}\subset\mathcal I_{ab}$ and
  $\mathcal I_{ab}\subset\mathcal I_{uv}$ correspond to inclusion, and
  $\mathcal I_{uv}\cap\mathcal I_{ab}=Q_{uv}\cap Q_{ab}$ 
  corresponds to touching in their
  boundary of the two interiors~$\mathcal I_{uv}$ and~$\mathcal I_{ab}$.
  Thus, if the conditions do not hold, the interiors must properly intersect,
  that is, without loss of generality, there is
  a vertex~$c\in Q_{uv}$ that lies in~$\mathcal I_{ab}\setminus Q_{ab}$,
  and a vertex~$d\in Q_{uv}$ that does not lie in~$\mathcal I_{ab}$. Hence, 
  the other two vertices of~$Q_{uv}$ lie in~$Q_{ab}$. Clearly,~$c$ and~$d$
  are opposite vertices in~$Q_{uv}$. By definition of IC-planar graphs, it holds 
  that~$\{a,b\}\cap\{u,v\}=\emptyset$. 
  
  First, assume that~$c=l_{uv}$. Then,~$u$ and~$v$ must lie in~$Q_{ab}$. More
  specifically, by definition of IC-planar graphs~$\{u,v\}=\{l_{ab},r_{ab}\}$.
  Without loss of generality, assume that~$u=r_{ab}$ and~$v=l_{ab}$.
  Since the edges~$(u,c)$ and~$(v,c)$ have to lie in~$\mathcal I_{ab}$, this 
  leads to the situation depicted in Figure~\ref{fi:interiors-1}. However,
  this implies that that there are only two routing edges for~$(a,b)$ with one of them
  incident to~$u$, and the other one is incident to~$v$. Thus, the routing edges
  are not valid. The case that~$c=r_{uv}$ works analogously.
  
  Second, assume that~$c=u$. Then,~$l_{uv}$ and~$r_{uv}$ must lie in~$Q_{ab}$.
  If~$l_{uv}$ and~$r_{uv}$ are adjacent on~$Q_{ab}$, say~$l_{uv}=b$ 
  and~$r_{uv}=r_{ab}$, then there is only a single routing edge for~$(u,v)$
  that is incident to~$b$ and thus not valid; see Figure~\ref{fi:interiors-2}.
  Otherwise, there are two cases. If~$\{l_{uv},r_{uv}\}=\{a,b\}$, say~$r_{uv}=a$ 
  and~$l_{uv}=b$, then there are only two routing edges for~$(u,v)$ with one
  of them incident to~$a$, and the other one incident to~$b$; see 
  Figure~\ref{fi:interiors-3}. If~$\{l_{uv},r_{uv}\}=\{l_{ab},r_{ab}\}$,
  say~$l_{uv}=l_{ab}$ and~$r_{uv}=r_{ab}$, then both routing edges of~$(u,v)$
  are incident to~$b$; see Figure~\ref{fi:interiors-4}. The case that~$c=v$
  works analogously.
  
  This proves that, if there is a proper intersection between two routing 
  quadrilaterals, than at least one of the corresponding matching edges has no
  valid routing edge. Thus, one of the conditions must hold.  
\end{proof}

 \begin{figure}[tb]
\centering
\subfigure[]{\includegraphics[scale=0.9,page=1]{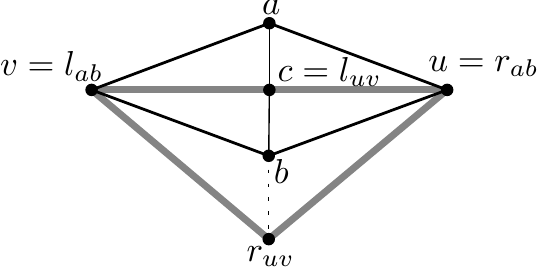}\label{fi:interiors-1}}\hfil
\subfigure[]{\includegraphics[scale=0.9,page=2]{interiors}\label{fi:interiors-2}}\\
\subfigure[]{\includegraphics[scale=0.9,page=3]{interiors}\label{fi:interiors-3}}\hfil
\subfigure[]{\includegraphics[scale=0.9,page=4]{interiors}\label{fi:interiors-4}}
    \caption{Illustration of the proof of Lemma~\ref{lem:interiors}.
      The routing quadrilateral~$Q_{ab}$ is drawn bold, and~$Q_{uv}$ is drawn dark gray and thicker. 
			(a) $l_{uv}$ lies in~$\mathcal I_{ab}\setminus Q_{ab}$. 
			(b) $l_{uv}$ and~$r_{uv}$ are adjacent in~$Q_{ab}$. 
			(c) $\{l_{uv},r_{uv}\}=\{a,b\}$. 
			(d) $\{l_{uv},r_{uv}\}=\{l_{ab},r_{ab}\}$.}
    \label{fi:interiors}
  \end{figure}

By using Lemma~\ref{lem:interiors}, we can find a hierarchical structure on the 
routing quadrilaterals. We construct a directed graph~$H=(V_H,E_H)$ with
$V_H=\{\mathcal I_{uv}\mid (u,v)\in M\}\cup \{G\}$.
For each pair~$\mathcal I_{uv},\mathcal I_{xy}$, $E_H$ contains a directed 
edge~$(\mathcal I_{uv},\mathcal I_{xy})$ if and only if 
$\mathcal V_{uv}\subset\mathcal V_{xy}$ and there is no matching edge~$(a,b)$ 
with~$\mathcal V_{uv}\subset\mathcal V_{ab}\subset\mathcal V_{xy}$.
Finally, we add an edge from each subgraph that has no outgoing edges to~$G$. 
Each vertex but~$G$ only has one outgoing edge. Obviously, this graph contains 
no (undirected) cycles. Thus,~$H$ is a tree.

We will now show how to construct a drawing of~$G$ based on~$H$ in a bottom-up
fashion. We will first look at the leaves of the graph. 
Let~$\mathcal I_{uv}$ be a vertex of~$H$ whose children are all leaves.
Let~$\mathcal I_{u_iv_i},\ldots,\mathcal I_{u_kv_k}$ 
be these leaves. Since these interiors are all leaves in~$H$, we can pick any
of their routing edges. However, the interiors may touch on their boundary, so
not every combination of routing edges can be used. Assume that a matching 
edge~$(u_i,v_i),1\le i\le k$ has more than two valid routing edges. Then, we
can always pick a \emph{middle} one, that is, a routing edge that is not 
incident to~$l_{u_iv_i}$ and~$r_{u_iv_i}$, since this edge will not interfere
with a routing edge of another matching edge. 

Now, we can create a 
2SAT formula to check whether there is a valid combination of routing edges
as follows. For the sake of clarity, we will create several redundant variables
and formulas. These can easily be removed or substituted by shorter structures
to improve the running time. For each matching edge~$(u_i,v_i),1\le i\le k$, we
create two binary variables~$l_i$ and~$r_i$, such that~$l_i$ is \true if
and only if the routing edge incident to~$l_{u_iv_i}$ is picked, and~$r_i$ is
\true if and only if the routing edge incident to~$r_{u_iv_i}$ is
picked. If~$(u_i,v_i)$ has only one routing edge, then it is obviously incident
to~$l_{u_iv_i}$ and~$r_{u_iv_i}$, so we set $l_{u_iv_i}=r_{u_iv_i}=\true$
by adding the clauses~$l_{u_iv_i}\vee\false$ 
and~$r_{u_iv_i}\vee\false$. If~$(u_i,v_i)$ has exactly two routing
edges, the picked routing edge has to be incident to either~$l_{u_iv_i}$ 
or~$r_{u_iv_i}$, so we add the clauses~$l_{u_iv_i}\vee r_{u_iv_i}$ 
and~$\neg l_{u_iv_i}\vee\neg r_{u_iv_i}$. If~$(u_i,v_i)$ has more than two 
routing edges, we can pick a middle one, so we set $l_{u_iv_i}=r_{u_iv_i}=\false$
by adding the clauses~$\neg l_{u_iv_i}\vee\false$ 
and~$\neg r_{u_iv_i}\vee\false$. Next, we have to add clauses to
forbid pairs of routing edges that can not be picked simultaneously, i.e.,
they share a common vertex. Consider a pair of matching 
edges~$(u_i,v_i),(u_j,v_j),1\le i,j\le k$. If~$r_{u_iv_i}$=$l_{u_jv_j}$, we add
the clause $\neg r_i\vee \neg l_j$. For the other three cases, we add
an analogue clause. 

Now, we use this 2SAT to decide whether the 
subgraph~$I_{uv}$ is IC-planar, and which routing edges can be used. For each
routing edge~$(a,b)$ of~$I_{uv}$, we solve the 2SAT formula given above with
additional constraints that forbid the use of routing edges incident to~$a$ 
and~$b$. To that end, add the following additional clauses: If~$l_{u_iv_i}=a$,
add the clause~$\neg l_i\vee\false$. For the other three cases, we add an 
analogue clause. If this 2SAT formula has no solution, then the 
subgraph~$\mathcal I_{uv}$ is not IC-planar. Otherwise, there is a solution 
where you pick the routing edges corresponding to the binary variables. 
To decide whether a subgraph~$I_{uv}$ whose children are not all leaves is 
IC-planar, we first compute which of their routing edges can be picked by 
recursively using the 2SAT formula above. Then, we use the 2SAT formula 
for~$I_{uv}$ to determine the valid routing edges of~$I_{uv}$. Finally, we can
decide whether~$G$ is IC-planar and, if yes, get a drawing by solving the
2SAT formula of all children of~$G$.

Hence, we give the following for the proof of the time complexity.

\newcommand{\triangtest}{
  Let $T=(V,E_T)$ be a triangulated plane graph with $n$ vertices and let 
  $M=(V,E_M)$ be a matching. There exists an $O(n^3)$-time algorithm to test if 
	$G=(V,E_T \cup E_M)$ admits an IC-planar drawing that preserves the embedding
	of~$T$. If the test is positive, the algorithm computes a feasible drawing.
}
\begin{theorem}\label{th:triang-test}
	\triangtest
\end{theorem}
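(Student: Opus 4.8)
The plan is to read the proof as an assembly-and-accounting argument: the three phases have already been described in the text, and their structural correctness rests on Lemma~\ref{lem:interiors}, so what remains is to (i) pin down the combinatorial condition that characterizes IC-planarity in this fixed-embedding setting, (ii) check that the bottom-up 2SAT procedure decides exactly that condition, and (iii) bound the running time by $O(n^3)$.

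First I would isolate the characterization behind phase~\ref{alg1}. Because the embedding of the triangulation~$T$ is fixed, a matching edge~$(u,v)$ can be inserted with a single crossing only by routing it through two faces that share an edge, one incident to~$u$ and one to~$v$; these are exactly the \emph{routing edges} recovered as length-$3$ paths in~$T^*\cup\{u^*,v^*\}$. Hence $G$ admits an embedding-preserving IC-planar drawing \emph{if and only if} one can select one routing edge for every matching edge so that no two selected routing edges share an endpoint---this is precisely the independent-crossing requirement, since each matching edge crosses only its selected routing edge. If some matching edge has no routing edge, the instance is correctly rejected; otherwise the task reduces to solving this selection problem.

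Then I would argue correctness of the decision procedure. Lemma~\ref{lem:interiors} makes the interiors~$\mathcal I_{uv}$ a laminar family, so the digraph~$H$ is a tree and any conflict between two selected routing edges (sharing a vertex) is confined to matching edges lying in a common tree node together with its immediate parent. Within a node the 2SAT formula encodes, per matching edge, whether its chosen routing edge is incident to the left or right boundary vertex of its quadrilateral, plus pairwise clauses forbidding selections that would share a vertex; the \emph{middle}-routing-edge observation guarantees that a matching edge with more than two routing edges never constrains its siblings, so only boundary incidences matter and $2$-clauses suffice. Solving a node's formula once for each routing edge of the parent (with that routing edge's endpoints forbidden) determines which parent routing edges survive, and propagating this bottom-up yields a globally consistent selection exactly when one exists. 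I would verify both directions: a satisfying assignment yields a valid set of crossings, hence a drawing; conversely, any IC-planar drawing preserving~$T$ induces a selection satisfying every clause.

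Finally, the running-time accounting. Phase~\ref{alg1} builds the restricted dual and enumerates the length-$3$ paths in $O(n)$ time per matching edge, and there are $O(n)$ matching edges, so $O(n^2)$; phase~\ref{alg2} compares the $O(n)$ interiors to build~$H$ in $O(n^2)$. Phase~\ref{alg3} dominates: at a node with $k$ children the formula has $O(k)$ variables and $O(k^2)$ clauses and is solved once per routing edge of the node, i.e.\ $O(n)$ times; since $\sum_{\text{nodes}}k=O(n)$, the worst case is a single node with $\Theta(n)$ children, giving an $O(n^2)$-size formula solved $O(n)$ times, for $O(n^3)$ total. I expect the main obstacle to be the correctness claim of the previous paragraph---namely, showing rigorously that laminarity together with the middle-edge trick confines \emph{every} sharing conflict to pairwise $2$-clauses inside single tree nodes processed bottom-up, so that no long-range dependency across the hierarchy is ever missed; once this localization is established, the complexity bound is routine bookkeeping.
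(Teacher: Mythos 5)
Your proposal follows the paper's own route essentially step for step: the same routing-edge characterization via length-$3$ paths in the restricted dual, the same laminarity of interiors (Lemma~\ref{lem:interiors}) yielding the tree~$H$, the same per-node 2SAT with left/right boundary variables, the middle-edge observation, and one 2SAT solve per routing edge, with the identical conservative count of $O(n)$ formulas of $O(n)$ variables and $O(n^2)$ clauses solved in linear time per clause set, giving $O(n^3)$. This matches the paper's argument (whose proof environment for the theorem is precisely this complexity accounting, the correctness having been established in the preceding text), so there is nothing substantive to add.
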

\begin{proof}
We need to prove that the described algorithm runs in $O(n^3)$ time. Indeed, for each subgraph~$I_{uv}$, we 
have to run a 2SAT formula for each routing edge. Once we have determined the
valid routing edges, we do not have to look at the children anymore. 
Let~$c_{uv}$ be the number of children of~$I_{uv}$. Each of these 2SAT formula contains $2c_{uv}$ variables and up to $4c_{uv}(c_{uv}-1) $ clauses. Since every edge
of~$G$ can only be a routing edge for exactly one matching edge, we have to 
solve at most~$n$ 2SAT formulas. The tree~$H$ consists of at most~$n/2+1$
vertices (one for each matching edge), so a very conservative estimation is
that we have to solve~$O(n)$ 2SAT formulas with~$O(n)$ variables and~$O(n^2)$
clauses each. Aspvall {\em et al.}~\cite{apt-ltatt-79}  
showed how to solve 2SAT in
time linear in the number of clauses. We can use the linear-time algorithm of
Section~\ref{ic:sec:drawing} to draw the IC-planar graph corresponding to the
IC-planar embedding by picking the routing edges corresponding to the binary
variables. Thus, our algorithm runs in~$O(n^3)$ time.
\end{proof}

\section{Open Problems}\label{se:conclusions}
The research presented in this paper suggests interesting open problems.
 
\begin{description}
\item[Problem 1.] We have shown that every IC-planar graph has a straight-line drawing in quadratic 
area, although the angle formed by any two crossing edges can be small. 
Conversely, straight-line \Rac drawings of IC-planar graphs may require 
exponential area. From an application perspective, it is interesting to design algorithms that compute a straight-line drawing of IC-planar graphs in polynomial area and good crossing resolution.

\item[Problem 2.] Also, although IC-planar graphs are both 1-planar and straight-line \Rac drawable graphs, a characterization of the intersection between these two classes is still missing. In particular, studying whether \emph{NIC-planar graphs} (see Zhang~\cite{z-dcmgp-AMS14}), which lie between IC-planar graphs and 1-planar graphs, are also \Rac graphs may lead to new insights on this problem.

\item[Problem 3.] We proved that recognizing IC-planar graphs is NP-hard. Is it possible to design fixed-parameter tractable (FPT) algorithms for this problem, which improve the time complexity of those described by Bannister \emph{et al.}~\cite{DBLP:conf/wads/BannisterCE13} for 1-planar graphs, with respect to different parameters (vertex cover number, tree-depth, cyclomatic number)? Are there other parameters that can be conveniently exploited for designing FPT testing algorithms for IC-planar graphs?  
\end{description}

\section*{Acknowledgments}
We thank the anonymous reviewers of this work for their useful comments and suggestions.
We also thank Michael A. Bekos and Michael Kaufmann for suggesting a simpler counterexample for the area requirement of IC-plane straight-line RAC drawings.

\clearpage

{\small \bibliography{abbrv,icplanar}}
\bibliographystyle{abbrv}

\end{document}